\newtheorem{theorem}{Theorem}
\newtheorem{lemma}[theorem]{Lemma}
\newtheorem{corollary}[theorem]{Corollary}
\newtheorem{prop}[theorem]{Proposition}
\newtheorem{conjecture}[theorem]{Conjecture}
\theoremstyle{definition}
\newtheorem{definition}[theorem]{Definition}
\newtheorem{obs}[theorem]{Observation}
\DeclareMathOperator{\sign}{sign}
\DeclareMathOperator{\rk}{rank}
\DeclareMathOperator{\tr}{tr}
\newcommand{\coloneqq}{\mathrel{:=}} 
\DeclareMathAlphabet{\mathpzc}{OT1}{pzc}{m}{it}
\DeclareMathOperator{\sgn}{sign}
\let\Re\undefined
\DeclareMathOperator{\Re}{Re}
\begin{document}
\title{Unstable Cores are the source of instability in chemical
    reaction networks}
\author{ \\
{~}\\Nicola Vassena$\;^{a}$, Peter F.\ Stadler$\;^{a,\,b,\,c,\,d,\,e}$}
\vspace{2cm}

\date{\today}
\maketitle
\thispagestyle{empty}

\begin{abstract}
 In biochemical networks, complex dynamical features such as superlinear
 growth and oscillations are classically considered a consequence of
 autocatalysis. For the large class of parameter-rich kinetic models, which
 includes Generalized Mass Action kinetics and Michaelis-Menten kinetics,
 we show that certain submatrices of the stoichiometric matrix, so-called
 unstable cores, are sufficient for a reaction network to admit instability
 and potentially give rise to such complex dynamical behavior. The
 determinant of the submatrix distinguishes unstable-positive feedbacks,
 with a single real-positive eigenvalue, and unstable-negative feedbacks
 without real-positive eigenvalues.  Autocatalytic cores turn out to be
 exactly the unstable-positive feedbacks that are Metzler matrices. Thus
 there are sources of dynamical instability in chemical networks that are
 unrelated to autocatalysis. We use such intuition to design
 non-autocatalytic biochemical networks with superlinear growth and
 oscillations.\\
  
  \noindent
\textbf{Keywords:} \textit{Autocatalysis, Stoichiometric Matrix, Parameter-rich kinetic model, Non-autocatalytic instabilities} 
\\ 
\end{abstract}

\vfill

 \par\noindent\rule{\textwidth}{0.5pt}
 \par

\footnotesize{[a] Interdisciplinary Center for Bioinformatics, Universit{\"a}t
  Leipzig, Germany.
  [b] Bioinformatics Groups, Department of Computer Science; Competence
  Center for Scalable Data Services and Solutions Dresden/Leipzig
  (scaDS.AI); German Centre for Integrative Biodiversity Research (iDiv);
  Leipzig Research Center for Civilization Diseases; School of Embedded and
  Compositive Artificial Intelligence, Universit{\"a}t Leipzig, Germany. [c] Department of Theoretical Chemistry, University of Vienna,
Austria. [d] Facultad de Ciencias, Universidad National de Colombia, Sede
  Bogot{\'a}, Colombia. [e] Santa Fe Institute, Santa Fe, New Mexico}

\newpage

\tableofcontents

\newpage



\section{Introduction}
Chemical Reaction Network theory has striven to elucidate the connection
between the structure of a (bio)chemical reaction network and its dynamical
behavior since Rutherford Aris' seminal work \cite{Aris:64}.  A question of
particular interest for applications in biological systems is whether, for
a given reaction network, there is a choice of kinetic parameters that
renders an equilibrium dynamically unstable and thus may give rise to
complex dynamical behavior.

Well-known examples of such instabilities, such as the famous
Belusouv--Zhabotinsky reactions \cite{Zhabotinsky:64,Cassani:21}, typically
involve autocatalysis, i.e., the presence of a species that catalyzes its
own production.  The concept of autocatalysis introduced by Wilhelm Ostwald
\cite{Ostwald:1890} in the context of chemical kinetics refers to a
temporary speed-up of the reaction before it settles down to reach
equilibrium, see e.g.\ \cite{Bissette:13,Schuster:19}.  Autocatalysis plays
a key role both in extant metabolic networks and in models of the origin of
life. Autocatalytic pathways, such as glycolysis, contain reactions that
consume some of the pathway's products and exhibit positive feedback
\cite{Kun:08,Siami:20}. In order to explain the emergence of
self-replicating organisms, ``collectively autocatalytic'' networks of
interacting molecules have been proposed \cite{Kauffman:86}. In the setting
of metabolic reaction networks, the importance of ``network autocatalysis''
has been emphasized \cite{Smith:04,Smith:13}.  In a more general setting,
such structures have been studied as self-maintaining chemical
organizations \cite{Kaleta:06,Benkoe:09a}. Despite the central role of
autocatalysis, however, its formal, mathematical understanding is
limited. Different concepts of autocatalysis have been proposed
\cite{Andersen:12a,Deshpande:14,Barenholz:17,blokhuis20, unterberger22};
for a comparison see \cite{DefAut20}.

Even though autocatalysis has played a central role in investigations of
complex dynamics, including deterministic chaos \cite{Gyorgyi:91},
dynamical systems theory does not seem to imply that autocatalysis has to
be necessarily invoked to explain the existence of dynamic instabilities.
In fact, the unstable manifold theorem \cite{Hsubook} suggests superlinear
divergence from an unstable equilibrium, while in a supercritical Hopf
bifurcation \cite{GuHo84,Hsubook} a stable equilibrium loses stability and
generates a stable periodic orbit. Neither of these situations requires
assumptions that even vaguely resemble autocatalysis. In both cases, the
key property is the loss of stability of an equilibrium. This begs the
question whether the existence of unstable equilibria can be explained in
terms of the structure of the chemical reaction network, or more precisely,
in terms of its stoichiometric matrix. Since autocatalytic cores are also
characterized in terms of the stoichiometric matrix \cite{blokhuis20}, we
set out here to disentangle instability and autocatalysis in terms of a
purely structural view on chemical reaction networks. On the one hand, we
confirm that autocatalysis has always the potential to destabilize an
equilibrium. Autocatalysis is thus a sufficient condition for instability
in certain parameter regions. On the other hand, we establish that
autocatalysis is not necessary for superlinear growth or oscillations.

In more detail, we shall see that small unstable subsystems are sufficient
to imply that an entire reaction network admits instability in kinetic
models that are sufficiently `rich' in parameters (Def.~\ref{def:p-rich}),
such as Michaelis-Menten \cite{MM13}, Hill \cite{Hill10}, and Generalized
Mass Action \cite{Muller:12} kinetics. A key ingredient towards this result
are the so-called Child-Selections (CS). Conceptually, a CS $\pmb{\kappa}$
is a square submatrix of the stoichiometric matrix $S$ comprising $k$
species and $k$ reactions such that there is a 1-1 association between
species $m$ and reactions $j(m)$ for which $m$ is a reactant for
$j(m)$. From this submatrix, we obtain the \emph{Child-Selection matrix}
$S[\pmb{\kappa}]$ by reordering the columns so that species and their
associated reactions correspond to the diagonal entries. The existence of
an unstable Child-Selection matrix is sufficient for a network
$\pmb{\Gamma}$ with a parameter-rich kinetic model to admit instability
(Cor.~\ref{CSinstability}).  Since Child-Selections can be
``concatenated'', minimal unstable Child-Selections are well defined, in
the sense that the Child-Selection matrix $S[\pmb{\kappa}]$ does not
contain again a proper principal submatrix that is an unstable
Child-Selection matrix, which leads to the main definition of this paper
(Def. \ref{def:unstablecore}):
\begin{center}
\emph{An \emph{unstable core} is an Hurwitz-unstable Child-Selection matrix
for which no principal submatrix is Hurwitz-unstable.}
\end{center}

Unstable cores come in two flavors: An unstable core is an
\emph{unstable-negative feedback} if $\displaystyle \sign\det
S[\pmb{\kappa}]=(-1)^{k}$ and \emph{unstable-positive feedback} if
$\displaystyle \sign\det S[\pmb{\kappa}]=(-1)^{k-1}$.  This terminology is
inspired by positive and negative feedback cycles, where the sign of the
feedback is typically defined as the sign of the product of the
off-diagonal entries. The condition on the determinants translates into a
structural difference in their spectra: Unstable-positive feedbacks have a
single real-positive eigenvalue, while unstable-negative feedbacks have no
real-positive eigenvalues at all (Lemma~\ref{lem:feedback}).  The presence
of related ``positive feedbacks'' is known to be \emph{necessary} for
multistationarity \cite{Soule2003,Ba-07} (see also \textbf{EXAMPLE
  A}). Oscillations, on the other hand, can be induced both by
unstable-positive and unstable-negative feedbacks (see \textbf{EXAMPLE B}
and \textbf{EXAMPLE C}).

We then shall turn our attention to autocatalysis and autocatalytic cores
in particular, and formally define an \emph{autocatalytic matrix}
(Def.~\ref{def:autocatalysis}), inspired by
\cite{blokhuis20}. Autocatalytic cores are then autocatalytic submatrices
of the stoichiometric matrix that do not contain any autocatalytic
submatrix. Our main result (Thm.~\ref{thm:autocoremain}) characterizes
autocatalytic cores in terms of unstable cores and Metzler matrices,
i.e., matrices with nonnegative off-diagonal entries, and can be expressed
as follows:
\begin{center}
  \emph{An autocatalytic core is an unstable-positive feedback that in
  addition is a Metzler matrix.}
\end{center}
Recalling that autocatalytic networks must contain an autocatalytic core
leads to a convenient characterization of general autocatalytic networks
(Cor.~\ref{cor:autocatnetwork}):
\begin{center}
  \emph{A network is autocatalytic if and only if there is a
  Child-Selection matrix that is a Hurwitz-unstable Metzler matrix.}
\end{center}
Our results show in particular that \emph{every autocatalytic network
admits instability}. The converse clearly is not true, i.e.,
\emph{autocatalysis is not necessary for instability}: First,
unstable-negative feedbacks are never autocatalytic
(Cor.~\ref{cor:noautocat} and \textbf{EXAMPLE C}). Second, it is
straightforward to construct unstable-positive feedbacks that are not
Metzler matrices and thus not autocatalytic (see \textbf{EXAMPLES A, B}, and
\textbf{EXAMPLE D} where we list non-autocatalytic unstable-positive
feedbacks in the sequential and distributive double phosphorylation).
  
The paper is organized as follows. Section \ref{sec:crn} presents the
general setting of reaction networks and Section \ref{sec:parameterrich}
the definition of parameter-rich models. Section \ref{sec:CS} introduces
Child-Selections and employs a Cauchy--Binet analysis to expand each
coefficient of the characteristic polynomial by Child-Selections: This is used to connect the topology of the network to stability properties.  Section
\ref{sec:unstablecores} introduces the
definition of \emph{unstable cores} and its generalization, \emph{D-unstable cores}. Section \ref{sec:feedbacks} addresses
unstable-positive and unstable-negative feedbacks. In Section
\ref{sec:autocatMetzler} we focus on autocatalytic cores and we show in
Theorem \ref{thm:autocoremain} that they are a special case of unstable
cores. Section \ref{sec:example} presents four examples of
non-autocatalytic networks that shows either multistationarity (with
superlinear growth) or oscillations. We refer to the Supplementary Material
(SM) for a full analysis and explanation of such examples. Albeit our paper shows that autocatalysis is not necessary for
complex behavior as oscillations or superlinear growth, in Section
\ref{sec:whatisspecial} we nevertheless provide an informal indication
why autocatalysis has been the typical source of instability in well-known
examples. Section \ref{sec:conclusion} concludes the paper, discussing the
results in a general context. {The technical proof of Theorem \ref{thm:autocoremain} is presented in Section \ref{sec:NC_Proof}.}

\section{Chemical reaction networks, dynamics, and stability}\label{sec:crn}

A chemical reaction network $\pmb{\Gamma}$ is a set $M$ of
\emph{species} together with a set $E$ of \emph{reactions}. A
reaction $j\in E$ is a pair of formal linear combinations
\begin{equation}\label{reaction}
  s^j_{m_1}m_1+...+s^j_{m_{|M|}}m_{|M|} \quad
  \underset{j}{\rightarrow}\quad
  \tilde{s}^j_{m_1}m_1+...+\tilde{s}^j_{m_{|M|}}m_{|M|},
\end{equation}
with \emph{stoichiometric} coefficients $s^j_m\ge0$ and $\tilde{s}^j_m\ge
0$. Usually, one assumes $s^j_m,\tilde{s}^j_m\in\mathbb{N}_0$, although
this restriction is not relevant here. A species $m$ is a \emph{reactant}
of $j$ if $s^j_m>0$ and a \emph{product} of $j$ if $\tilde{s}^j_m>0$.  Note
that \eqref{reaction} assigns a fixed order to any reaction $j$, so that we
treat all reactions as \emph{irreversible}. However, reversible processes
such as $j:\, A \rightleftharpoons B$ can be naturally taken into account
by considering two opposite reactions $j_1:\, A\to B$ and $j_2:\, B\to
A$. In biological systems, reaction networks are often \emph{open}: They
exchange chemicals with the outside environment. For this reason, we also
consider reactions with no outputs (\emph{outflow reactions}) or with no
inputs (\emph{inflow reactions}). These describe the exchange of material
between the system and its environment.

A reaction $j_{C}$ is \emph{explicitly catalytic} if a species $m$ is both
a reactant and a product of the reaction $j_C$, i.e., $s^{j_{C}}_m
\tilde{s}^{j_{C}}_m \neq 0$. The net change of chemical composition in a
reaction network is described by the $|M|\times |E|$ \emph{stoichiometric
matrix} $S$ defined as
\begin{equation}\label{S}
    S_{mj}:=\tilde{s}^j_m - s^j_m,
\end{equation}
Note that for explicit catalysts $m$ with $s^{j_{C}}_m=
\tilde{s}^{j_{C}}_m$, the stoichiometric coefficients cancel and thus do
not appear in the entry $S_{m{j_C}}$ of the stoichiometric matrix.

The time-evolution $x(t)\ge0$ of an $|M|$-vector of the concentrations
under the assumption of spatial homogeneity, e.g.\ in a well-mixed reactor,
obeys the system of ordinary differential equations
\begin{equation}\label{maineq}
\dot{x}=f(x) \coloneqq Sr(x),
\end{equation}
where $r: \mathbb{R}^{|M|}_{\ge0} \to \mathbb{R}^{|E|}_{\ge0}$ is a vector
of reaction rate functions. In this contribution, we will in particular be
concerned with \emph{equilibria} or \emph{fixed points}, i.e.,
$|M|$-vectors $\bar{x}$ satisfying $f(\bar{x})=0$.
{Throughout, we assume that the stoichiometric matrix $S$ admits a strictly positive kernel vector $\mathbf{r}\in \mathbb{R}^{|E|}_{\ge0}$, i.e. $S\mathbf{r}=0$, which is in turn a basic necessary condition for the existence of any equilibrium in the first place. Networks satisfying this standard assumption have been called \emph{consistent} in the literature, see for example \cite{Ang07}.}
Assuming that $f$ is a
continuously differentiable vector field, the stability of an equilibrium
$\bar{x}$ is determined by the Jacobian matrix $G$ evaluated at $\bar{x}$,
which has entries $G_{hm}(\bar{x})\coloneqq \partial f_h(x)/\partial
x_m|_{x=\bar{x}}$. A real square matrix $A$ is \emph{Hurwitz-stable} if all
its eigenvalues $\lambda$ satisfy $\Re\lambda<0$. It is
\emph{Hurwitz-unstable} if there is at least one eigenvalue $\lambda_u$
with $\Re\lambda_u>0$. It is well known that an equilibrium $\bar{x}$ is
dynamically stable if $G(\bar{x})$ is Hurwitz-stable, and unstable if
$G(\bar{x})$ is Hurwitz unstable \cite{Hsubook}.

We consider all reactions to be directional. Thus the rate $r_j$ of a
reaction $j$ is a non-negative function. We further assume that it depends
only on the concentrations of its reactants, i.e., the molecular species
$m$ with stoichiometric coefficients $s^{j}_{m}>0$.  Finally, a reaction
can only take place if all its reactants are present, in which case the
rate increases with increasing reactant concentrations. We capture these
conditions in the following formal definition of a kinetic model for a
given chemical reaction network:
\begin{definition}\label{def:kineticmodel}
  Let $\pmb{\Gamma}=(M,E)$ be a chemical reaction network. A
  differentiable function $r:\mathbb{R}^M_{\ge0}\to\mathbb{R}^E$ is a
  \emph{kinetic model} for $\pmb{\Gamma}$ if 
  \begin{enumerate}
  \item $r_j(x)\ge0$ for all
  $x$, 
  \item $r_j(x)>0$ implies $x_m>0$ for all $m$ with $s^j_{m}>0$,
\item  $s^j_{m}=0$ implies $\partial r_j/\partial x_m\equiv 0$,
\item if
  $x>0$ and $s^j_{m}>0$ then $\partial r_j/\partial x_m>0$.
  \end{enumerate}
\end{definition}
For $\bar{x}\in\mathbb{R}^M_{\ge 0}$ we write
\begin{equation} 
  r'_{jm}(\bar{x}):= \frac{\partial r_j(x)}{\partial x_m}
  \bigg\vert_{x=\bar{x}} .
\end{equation}
The $|E|\times |M|$ matrix $ R(\bar{x})$ with entries
$R_{jm}:=r'_{jm}(\bar{x})$ is called the \emph{reactivity matrix}. By
construction, we have $r'_{jm}(\bar{x})\ge 0$ for all $m\in M$ and $j\in
E$, i.e., $R(\bar{x})$ is a non-negative matrix. For $\bar{x}>0$, moreover,
$r'_{jm}(\bar{x})>0$ if and only if $s^j_{m}>0$. Thus, for any strictly
positive equilibrium $\bar{x}>0$, the signs of $R(\bar{x})$ are completely
determined by the reactants in the network. Together, the stochiometric
matrix $S$ and the reactivity matrix $R$ determine the stability of an
equilibrium, since \eqref{maineq} implies that the Jacobian $G$ is of the
form
\begin{equation}
  G(\bar{x}) = S R(\bar{x}).
\end{equation}
Prominent examples of kinetic models include \emph{Mass Action kinetics}
\cite{MA67}
\begin{equation}\label{MAeq}
  r_j(x) := a_j\prod_{m\in M} x_m^{s_{m}^{j}} 
\end{equation}
and \emph{Michaelis--Menten kinetics} \cite{MM13}
\begin{equation}\label{MMeq}
  r_j(x) : =a_j\prod_{m\in M} \Bigg( \frac{x_m}{(1+b^j_m
    x_m)}\Bigg)^{s^j_m}.
\end{equation}
Note that Mass Action kinetics appears as the limiting case of
Michaelis--Menten kinetics with $b^j_m=0$ for all $j,m$. Here, $r_j(x)$ depends on parameters ($a_j$ and $b_m^j$). In general, we
write $r(x;p)$ for a parametric kinetic model that depends on parameters
$p$.  In this contribution we are not interested in concrete choices of
such parameters. Instead, we are interested in the existence of unstable
equilibria given a suitable choice of parameters.
\begin{definition}
  A network $\pmb{\Gamma}=(M,E)$ with a parametrized kinetic model $r(x;p)$
  \emph{admits instability} if there exists a choice $\bar{p}$ of
  parameters such that there is a positive equilibrium $\bar{x}$ of
  $\dot{x}=S r(x;\bar{p})$ with a Hurwitz-unstable Jacobian $G(\bar{x})$.
  \label{def:admitinstabil}
\end{definition}

\section{Parameter-rich kinetic models} \label{sec:parameterrich}
We are particularly interested in kinetic models $r(x;p)$ that have a
sufficient number of free parameters $p$ such that the equilibrium
$\bar{x}$ and $R(\bar{x})$ can be chosen independently.  We formalize this
idea as follows:
\begin{definition}
  A kinetic rate model $r(x;p)$ is \emph{parameter-rich} if, for every
  positive equilibrium $\bar x>0$ and every $|E|\times |M|$ matrix $R$ with
  entries satisfying $r'_{jm}>0$ if $s_{m}^{j}>0$ and $r'_{jm}=0$ if
  $s_{m}^{j}=0$, there are parameters $\bar{p}=p(\bar{x},R)$ such that
  $\frac{\partial r_j(x;\bar{p})}{\partial x_m}\big|_{x=\bar{x}} = r'_{jm}$.
  \label{def:p-rich}
\end{definition}
{For any choice of a positive vector $\bar{x}$ and  matrix
  entries $r'_{jm}$ with $s_{m}^{j}>0$, in parameter-rich
  models it is possible to find parameter values $\bar{p}$ such that $\bar{x}$ is a fixed
  point and the $r'_{jm}$ are the partial derivatives at the equilibrium $\bar{x}$. The only constraints on the Jacobian, therefore, derives from the stoichiometry of the network.
  }  In the following
we write $\pmb{r'}$ for an arbitrary non-zero choice of the $r'_{jm}$ with
$s_{m}^{j}>0$. It is then convenient to think of the partial derivatives
$r'_{jm}$ themselves as \emph{symbols} that can be specialized to
particular positive values at our convenience.  We write $R(\pmb{r'})$ for
the corresponding symbolic \emph{reactivity matrix}. The Jacobian
$G(\pmb{r'}) = S R(\pmb{r'})$ thus can also be viewed as a symbolic
matrix. Combining this with Def.~\ref{def:admitinstabil} and
Def.~\ref{def:p-rich} we immediately arrive at the following
\begin{obs}
  \label{obs:instability}
  A network $\pmb{\Gamma}$ with a parameter-rich kinetic model admits
  instability if and only if there is a choice of symbols $\pmb{r}'$
  such that the symbolic Jacobian $G(\pmb{r}')$ is Hurwitz-unstable.
\end{obs} 

{ Def.~\ref{def:p-rich} requires parameter-rich
  kinetic models to have enough parameters to simultaneously satisfy
  constraints at two levels: at the level of the functions
  $r(x,p)$ and at the level of their first derivatives. The former comes
  from the equilibrium constraints, i.e. $S\pmb{r}(x)=0$, while the latter
  comes from the matrix $R$ that -- jointly with the stoichiometric matrix
  $S$ -- prescribes the Jacobian. This fact suggests that at least two
  parameters for each reaction rate $r_j$ must be present in order to
  account for both levels of constraints. Mass action kinetics,
 Eq.\eqref{MAeq}, presents one single parameter for each reaction
  rate, while Michaelis--Menten kinetics, Eq.\eqref{MMeq}, at least
  two. This observation indicates that mass action kinetics is not
  parameter-rich, while Michaelis--Menten {is}. Theorem 6.1 in
  \cite{VasHunt} confirms this expectation regarding Michaelis--Menten
  kinetics:}
\begin{lemma}\label{lemma:mmrich}
  Michaelis--Menten kinetics, Eq.(\ref{MMeq}), is parameter-rich.
\end{lemma}
In the SM, Section 1.1, we include a short proof of Lemma
\ref{lemma:mmrich} to make this contribution self-contained. Clearly, any
kinetic model that contains Michaelis--Menten kinetics as a special case is
also parameter-rich, e.g.\ Hill kinetics \cite{Hill10}.  In contrast,
we formally confirm that mass action is \emph{not}
parameter-rich: The derivative of the reaction rate of $j$ with respect to
the concentration $x_m$ of one of its reactants $m$ reads
\begin{equation}\label{eq:maderivative}
  r'_{jm}(x)=\;{s^j_m}\; x_m^{(s^j_m - 1)} \;
  k_j \prod_{n \neq m}  x_n^{s^j_n}=\frac{s^j_m}{x_m} r_j(x).
\end{equation}
For a fixed equilibrium $\bar{x}>0$, the relation
$r'_{jm}(\bar{x})=\frac{s^j_m}{\bar{x}_{m}}r_j(\bar{x})$ shows the absence
of parameter freedom to harness the value of the derivatives
$r'_{jm}(\bar{x})$ independently from the values of the fluxes
$r_j(\bar{x})$ and the concentration $\bar{x}$. Thus, if the
fluxes $ \bar{r}_j=r_j(\bar{x})$ of a fixed concentration $\bar{x}$ solve
the equilibrium constraints $S \bar{r}=0$, the values of the derivatives
$r'_{jm}(\bar{x})$ cannot be chosen independently and with freedom, contradicting Def.~\ref{def:p-rich}. {In particular, the set of Jacobian matrices of an equilibrium of a mass-action system on a network $\pmb{\Gamma}$ is a \emph{subset} of the set of Jacobian matrices of an equilibrium of any parameter-rich models on $\pmb{\Gamma}$. This has two important consequences: The fact that $\pmb{\Gamma}$ endowed with parameter-rich kinetics admits instability does not imply that the same network $\pmb{\Gamma}$ with mass-action kinetics admits instability. In contrast, the fact that $\pmb{\Gamma}$ with parameter-rich kinetics does not admit instability implies that $\pmb{\Gamma}$ does not admit instability with mass-action kinetics as well.}

On the other hand, \emph{Generalized Mass Action kinetics} considers the
stoichiometric exponent $s^j_m$ appearing in mass action to be an {additional} positive
parameter itself. Such kinetics has been proposed to model
intra-cellular reactions where the assumption of spatial homogeneity fits
less. See \cite{Muller:12} for more details. Let $c^j_m\in\mathbb{R}_{\ge
  0}$ be the real parametric exponent of the species $m$ in the reaction
$j$ and assume that $c^j_m > 0$ if and only if $s^j_m>0$, and zero
otherwise. Generalized Mass Action kinetics reads
\begin{equation}\label{eq:gma}
  r_j(x) := a_j\prod_{m\in M} x_m^{c_{m}^{j}}, \quad \quad \quad
  \text{with $c^j_m \neq 0$ if and only if $s^j_m\neq 0$}.
\end{equation}
Analogously to Lemma~\ref{lemma:mmrich}, we have the following.
\begin{lemma}
Generalized mass action, \eqref{eq:gma}, is parameter-rich.
\end{lemma}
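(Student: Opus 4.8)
The plan is to mirror the proof of Lemma~\ref{SIlemma:mmrich}: given an arbitrary positive vector $\bar{x}>0$ that we wish to realize as an equilibrium, together with a target reactivity matrix $R$ whose entries $\bar{r}'_{jm}$ are strictly positive exactly for the reactants of each reaction, I would exhibit an explicit choice of the parameters $a_j$ and of the exponents $c^j_m$ in \eqref{SIeq:gma} that realizes both simultaneously. The first step, and the conceptual heart of the argument, is to differentiate the rate law: from $r_j(x)=a_j\prod_{m\in M}x_m^{c^j_m}$ one obtains for each reactant $m$ of $j$
\[
r'_{jm}(x)=a_j\, c^j_m\, x_m^{c^j_m-1}\prod_{n\neq m}x_n^{c^j_n}=\frac{c^j_m}{x_m}\,r_j(x),
\]
which is formally identical to the mass-action identity \eqref{SIeq:maderivative} except that the \emph{fixed} stoichiometric exponent $s^j_m$ has been replaced by the \emph{free} parameter $c^j_m$. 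This substitution is exactly what supplies the missing degree of freedom.

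With this identity in hand, I would fix a positive right-kernel vector $\bar{r}>0$ of $S$ (available by the standing assumption $S\bar{r}=0$) and set the fluxes at $\bar{x}$ equal to $\bar{r}$. Solving the displayed identity for the exponents then forces
\[
c^j_m\coloneqq \frac{\bar{x}_m\,\bar{r}'_{jm}}{\bar{r}_j},
\]
which is strictly positive precisely when $m$ is a reactant of $j$ and vanishes otherwise, so the admissibility condition $c^j_m\neq0\iff s^j_m\neq0$ of \eqref{SIeq:gma} is automatically respected. Having frozen the exponents, each prefactor is then determined by a single scalar equation, giving, in analogy with \eqref{SIa},
\[
a_j\coloneqq \bar{r}_j\prod_{m\in M}\bar{x}_m^{-c^j_m}.
\]
Substituting back yields $r_j(\bar{x})=\bar{r}_j$, hence $Sr(\bar{x})=S\bar{r}=0$ so that $\bar{x}$ is indeed an equilibrium, and $r'_{jm}(\bar{x})=(c^j_m/\bar{x}_m)\,r_j(\bar{x})=\bar{r}'_{jm}$, matching the prescribed reactivities as required by Def.~\ref{SIdef:p-rich}.

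I do not expect a genuine obstacle here; the verification is a direct substitution, and the only points needing care are the positivity and admissibility of the exponents, both of which follow immediately from $\bar{r}>0$, $\bar{x}>0$, and $\bar{r}'_{jm}>0$ on reactants. The instructive contrast is with Sec.~\ref{SIsec:massaction}: for ordinary mass action the exponents are frozen at $c^j_m=s^j_m$, so the same identity collapses to $r'_{jm}(\bar{x})=(s^j_m/\bar{x}_m)\,r_j(\bar{x})$ and the reactivities can no longer be decoupled from the fluxes, which is precisely why mass action fails to be parameter-rich. The whole point of Generalized Mass Action is that promoting the $s^j_m$ to tunable exponents $c^j_m$ restores exactly this freedom, and the proof amounts to making that observation quantitative.
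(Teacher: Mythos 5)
Your proposal is correct and follows essentially the same route as the paper's own proof: you derive the identity $r'_{jm}(x)=\frac{c^j_m}{x_m}r_j(x)$, fix a positive kernel vector $\bar{r}$ with $S\bar{r}=0$, and invert the identity to obtain $\mathpzc{c}^j_m=\bar{x}_m\bar{r}'_{jm}(\bar{r}_j)^{-1}$ and the matching prefactor $\mathpzc{a}_j$, exactly as in the paper. (Your expression $a_j=\bar{r}_j\prod_{m\in M}\bar{x}_m^{-c^j_m}$ is in fact the cleaner rendering of the paper's formula, whose displayed exponent placement contains a typographical slip.)
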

\begin{proof}
The same computation as \eqref{eq:maderivative} leads to
\begin{equation}
   r'_{jm}(x)=\frac{c^j_m}{x_m} r_j(x), 
\end{equation}
where now $c^j_m$ is not a fixed integer value but a real-valued
parameter. Thus, as in the proof of Lemma~\ref{lemma:mmrich}, we can fix
any concentration $\bar{x}>0$, equilibrium fluxes
$\bar{r}\in\mathbb{R}^{|E|}_{>0}$ with $S\bar{r}=0$ and any choice of
symbols $\bar{\pmb{r}}'$. For the positive choice of parameters
\begin{equation}
\begin{cases}
\mathpzc{c}^j_m:=\bar{x}_m \bar{r}'_{jm}(\bar{r}_j)^{-1},\\
\mathpzc{a}_j:=\bar{r}_j \big(\prod_{m\in M} \bar{x}_m\big)^{-\bar{x}_m \bar{r}'_{jm}(\bar{r}_j)^{-1}}
\end{cases}
\end{equation}
the generalized mass-action functions $r_j(x):=\mathpzc{a}_j\prod_{m\in M}
x_m^{\mathpzc{c}_{m}^{j}}$ satisfy $Sr(\bar{x})=0$, i.e. $\bar{x}$ is an
equilibrium, with prescribed partial derivatives $\bar{\pmb{r}}'$. Thus
Generalized Mass Action kinetics is parameter-rich.
\end{proof}

{Finally, we point out that some parameter-rich kinetics can themselves be derived as a singular limit of mass-action systems: This should not create confusion. To clarify this in an explicit example, consider Michaelis--Menten kinetics. It is well known that a reaction $A \rightarrow B$ endowed with Michaelis-Menten kinetics can be seen as the singular limit of the three-reactions mass-action network $A+E\rightleftharpoons I \rightarrow B+E$. In this latter network, however, three reactions instead of one appear. Consequently, the associated mass-action system presents three parameters, not one. In contrast, the very network $A\rightarrow B$ presents only one parameter when endowed with mass action. In this viewpoint, the parameter-richness of Michaelis--Menten is inherited by such expanded slow-fast mechanism under mass-action constraints.  Def.~\ref{def:p-rich} treats nevertheless the structure of the network as given and fixed, and it does not consider expanded 'elementary' versions of the same network.}

\section{Child-Selections, CB-summands, elementary CB-components}\label{sec:CS}
A central tool in this work are bijective associations between molecular
species and reactions that give rise to square submatrices of $S$.
\begin{definition}
  Let $\pmb{\Gamma}=(M,E)$ be a network with stoichiometric matrix $S$.
  A \emph{$k$-Child-Selection triple}, or $k$-CS for short, is a triple
  $\pmb{\kappa}=(\kappa,E_{\kappa},J)$ such that $|\kappa|=|E_{\kappa}|=k$,
  $\kappa\subseteq M$, $E_{\kappa}\subseteq E$, and $J:\kappa\to
  E_{\kappa}$ is a bijection satisfying $s_m^{J(m)}>0$ for all
  $m\in\kappa$. We call $J$ a \emph{Child-Selection bijection}.
\end{definition}
Note that since $J$ is a bijection between two ordered sets, we can
naturally consider the \emph{signature} (or parity) $\sgn{J}$ of the map
$J$, where $J$ is seen as a permutation of a set of cardinality $k$. For an
$|M|\times|E|$ matrix $A$, and subsets $\kappa \subseteq M$, $\iota
\subseteq E$, the notation $A[\kappa,\iota]$ refers to the submatrix of $A$
with rows in $\kappa$ and columns in $\iota$. For square matrices,
principal submatrices with $\kappa=\iota$ are indicated as $A[\kappa]$. A
$k$-CS $\pmb{\kappa}$ identifies a $k\times k$ submatrix
$S[\kappa,E_{\kappa}]$ of $S$. Its columns can be reordered such that the
reactions $J(m)$ appear in the same order as their corresponding species
$m$. This gives rise to a matrix $S[\pmb{\kappa}]$ with entries
\begin{equation}\label{def:csmatrixentries}
S[\pmb{\kappa}]_{ml} \coloneqq S[\kappa,E_{\kappa}]_{m,J(l)} =
  \tilde s_{m}^{J(l)} - s_{m}^{J(l)},
\end{equation}
where the permutation of the columns of $S[\pmb{\kappa}]$ from
$S[\kappa,E_\kappa]$ is described by the Child-Selection bijection $J$ with
signature $\sgn J$. In particular,
\begin{equation}\label{eq:jperm}
  \det S[\pmb{\kappa}]=
  \sgn{J}\det S[\kappa,E_{\kappa}],
\end{equation}
since the determinant is a multilinear and alternating form of the columns
of a matrix. We call a matrix $S[\pmb{\kappa}]$ arising from a $k$-CS
$\pmb{\kappa}$ in $\pmb{\Gamma}$ a \emph{Child-Selection matrix}.

Let $\pmb{\kappa}=(\kappa,E_{\kappa},J)$ be a $k$-CS.  Consider
$\kappa'\subseteq\kappa$, $E_{\kappa'}=\{J(m)\;|\;m\in\kappa'\}$ and
$J':\kappa'\to E_{\kappa'}$ with $J'(m)=J(m)$. Clearly,
$E_{\kappa'}\subseteq E_{\kappa}$ and $J':\kappa'\to E_{\kappa'}$ is the
restriction of $J$ to $\kappa'$, and thus bijective. In particular
$\pmb{\kappa'}=(\kappa',E_{\kappa'},J')$ is a $|\kappa'|$-CS. We say that
$\pmb{\kappa'}$ is a \emph{restriction} of $\pmb{\kappa}$. {Child-Selections come thus with a natural notion of minimality
w.r.t.\ some property $\mathbb{P}$.
\begin{definition}
  A $k$-CS $\pmb{\kappa}$ is minimal w.r.t.\ $\mathbb{P}$ if there is no
  restriction $\pmb{\kappa'}$ of $\pmb{\kappa}$ with
  $\kappa'\subsetneq\kappa$ such that $\pmb{\kappa'}$ has property
  $\mathbb{P}$.
\end{definition}
From a matrix perspective, we have the following observation.
\begin{obs}\label{obs:minimalP}
  Let $\pmb{\kappa}$ be a $k$-CS and set $A\coloneqq
  S[\pmb{\kappa}]$. Then every principal submatrix of $A$ satisfies
$A[\kappa']=S[\pmb{\kappa'}]$, where $\pmb{\kappa'}$ is the restriction
  of $\pmb{\kappa}$ to $\kappa'\subseteq\kappa$. 
  Suppose $\mathbb{P}$ is a matrix property.
  A Child-Selection matrix $S[\pmb{\kappa}]$ is minimal w.r.t.\ $\mathbb{P}$ if no proper principal submatrix of
  $S[\pmb{\kappa}]$ has the property $\mathbb{P}$.
\end{obs}}

As addressed in \cite{VasHunt}, the $k$-CS triples and their associated
matrices provide sufficient conditions for instability in a network with
a parameter-rich kinetic model. {We only sketch here the relevant results, based on the Cauchy--Binet formula. To make this contribution self-contained, we refer to the SM Section 1.2 for detailed proofs in the present setting. First, a definition:
\begin{definition}[Cauchy--Binet summands, elementary Cauchy--Binet components]
We call the \emph{Cauchy--Binet (CB) summands} of $G$ the matrices 
\begin{equation}
G[(\kappa,E_{\kappa})]:=S[\kappa,E_\kappa]R[E_{\kappa},\kappa],
\end{equation}  
where $E_\kappa \subseteq E$ are reaction sets for which there exists at least one Child-Selection bijection $J$ with
$J(\kappa)=E_\kappa$. We call \emph{elementary
Cauchy--Binet (CB) components} of $G$ the matrices
\begin{equation}
G[\pmb{\kappa}]:=S[\pmb{\kappa}]R[\pmb{\kappa}],
\end{equation} 
where $R[\pmb{\kappa}]$ is the diagonal matrix with entries
$R[\pmb{\kappa}]_{mm}=r'_{J(m)m}$ and $J$ is the Child-Selection bijection of $\pmb{\kappa}$.
\end{definition}
Note that the CB-summands of $G$ can be seen as the symbolic Jacobian matrices of the subnetworks comprising species in $\kappa$ and reactions in $E_\kappa$. In contrast, the elementary CB-components have a more algebraic nature and lack an analogous network interpretation. For matrices that are symbolic Jacobians of networks, the
coefficients of the characteristic polynomial can be expanded along CB-summands and elementary CB-components.
 Let
\begin{equation}
    g(\lambda)=\sum_{k=0}^{|M|}(-1)^kc_k\lambda^{|M|-k}
\end{equation}
be the characteristic polynomial of the symbolic Jacobian matrix $G(
\pmb{r}')=SR(\pmb{r}')$. The coefficients $c_k$ are the sum of the
principal minors $\det{G[\kappa]}$ for all sets $|\kappa|=k$ and the following expansion holds
\begin{equation}\label{eq:completeCBexpansion}
  c_k=\sum_{(\kappa,E_{\kappa})}\det G[(\kappa,E_{\kappa})] = 
  \sum_{\pmb{\kappa}}\det G[\pmb{\kappa}],
\end{equation}
where the first sum runs over all pairs of sets $(\kappa,E_\kappa)$ with
cardinality $k$ for which there exists at least one Child-Selection
bijection $J(\kappa)=E_\kappa$. The second sum runs over all $k$-CS triples
$\pmb{\kappa}$. 
The first equality is a consequence of the Cauchy--Binet formula, while the second a consequence of the Leibniz' expansion of the determinant. The expansion \eqref{eq:completeCBexpansion} in particular shows that the characteristic
polynomial is independent of the labeling of the network.
The following lemma and corollary state respectively that $\pmb{\Gamma}$ admits
instability for parameter-rich kinetic models whenever any CB-summand or elementary CB-component
admits instability.
\begin{lemma}
  Assume that $\pmb{\Gamma}=(M,E)$ is a network with a parameter-rich
  kinetic model. Assume there exists a choice of positive symbols
  $\pmb{r}'$ such that a CB-summand $G[(\kappa,E_\kappa)]$ is
  Hurwitz-unstable.  Then the network admits instability.
  \label{lemma:CSinst} 
\end{lemma}
\begin{corollary}\label{cor:CSinst}
  Assume that $\pmb{\Gamma}=(M,E)$ is a network with a parameter-rich
  kinetic model. Assume there exists a choice of positive symbols $\pmb{r}'$
  such that an elementary CB-component $G[\pmb{\kappa}]$ is
  Hurwitz-unstable. Then there is also a choice of positive symbols such
  that the CB-summand $G[(\kappa,E_\kappa)]$ is Hurwitz-unstable and, in
  particular, the network admits instability.
\end{corollary}}

\section{Unstable cores and D-unstable cores}\label{sec:unstablecores}

The results of Section \ref{sec:CS} can now be used to obtain more convenient
sufficient topological conditions for a network to admit instability. A matrix $A$ is said to be \emph{$D$-stable} if $AD$
is Hurwitz-stable for every positive diagonal matrix $D$
\cite{Arrow:58}. It is \emph{$D$-unstable} if there is a positive diagonal
matrix $D$ such that $AD$ is Hurwitz-unstable. Clearly, an elementary
CB-component is Hurwitz-unstable for some choice of parameters
if and only if $S[\pmb{\kappa}]$ is D-unstable. In particular, $D$-unstable
Child-Selection matrices $S[\pmb{\kappa}]$ are sufficient for the network
to admit instability. Unfortunately, D-stability is non-trivial to check
algorithmically \cite{Ku19}. However, choosing $D$ to be the identity
matrix immediately shows that Hurwitz-instability implies
D-instability. Thus Lemma~\ref{lemma:CSinst} in particular recovers
\begin{corollary}[Prop.~5.12 of \cite{VasHunt}]\label{CSinstability}
  Let $\pmb{\Gamma}=(M,E)$ be a network with a parameter-rich
  kinetic model and stoichiometric matrix $S$. If $\pmb{\kappa}$ is a
  $k$-CS such that $S[\pmb{\kappa}]$ is unstable then
  $\pmb{\Gamma}$ admits instability.
\end{corollary}
Since unstable Child-Selection matrices imply network instability, it is of
interest to consider minimal unstable Child-Selection matrices.
\begin{definition}\label{def:unstablecore}
  An \emph{unstable core} is an unstable Child-Selection matrix
  $S[\pmb{\kappa}]$ for which no proper principal submatrix is unstable.
\end{definition}
By Obs.~\ref{obs:minimalP}, an unstable core is thus an unstable Child-Selection
matrix that has no unstable restriction. {We recall that the eigenvalues of a reducible matrix are the union of the eigenvalues of
  its irreducible diagonal blocks. In particular, if a reducible matrix is
  Hurwitz-unstable, then one of its diagonal blocks is Hurwitz-unstable. Since a
  diagonal block is a proper principal submatrix, this yields the irreducibility of unstable cores. We state this in an observation for later use.
\begin{obs}\label{obs:irr}
An unstable core is an irreducible matrix.
\end{obs}}

Unstable cores are sufficient but not necessary causes of instability.
A natural generalization arises from considering D-instability instead of
Hurwitz-instability: 
\begin{definition}
  A \emph{D-unstable core} is a D-unstable Child-Selection matrix
  $S[\pmb{\kappa}]$ for which no proper principal submatrix is D-unstable.
\end{definition}
An immediate consequence of the definition and the fact that
Hurwitz-instability implies D-instability is that every unstable core contains a D-unstable core. Note that a D-unstable core may be Hurwitz-stable: In this case, it can be
a proper submatrix of an unstable core. It remains an open question whether or under which network conditions the
converse of Cor.~\ref{cor:CSinst} is also true:
\begin{conjecture} \label{conjecture}
  Suppose that the network does not possess any $D$-unstable core. Then
  $G(\pmb{r'})$ cannot be Hurwitz-unstable and thus the network
  $\pmb{\Gamma}$ does not admit instability.
\end{conjecture}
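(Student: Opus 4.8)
The plan is to translate the hypothesis into a statement about the elementary Cauchy--Binet components and then read off the signs of the coefficients of the characteristic polynomial of $G(\mathbf{r}')$. First I would note that the absence of a $D$-unstable core is equivalent to the stronger-looking assertion that \emph{every} Child-Selection matrix $S[\pmb{\kappa}]$ is $D$-stable: if some $S[\pmb{\kappa}]$ were $D$-unstable, then among the restrictions of $\pmb{\kappa}$ that are $D$-unstable pick one minimal with respect to inclusion; by Obs.~\ref{SIobs:restrCS} its proper principal submatrices are exactly the matrices $S[\pmb{\kappa'}]$ of its strict restrictions, none of which is $D$-unstable by minimality, so it would be a $D$-unstable core, contradicting the hypothesis. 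Hence $S[\pmb{\kappa}]D$ is Hurwitz-stable for every positive diagonal $D$ and every $\pmb{\kappa}$; in particular each elementary CB component $G[\pmb{\kappa}]=S[\pmb{\kappa}]R[\pmb{\kappa}]$ is Hurwitz-stable for every positive choice of symbols $\mathbf{r}'$, since $R[\pmb{\kappa}]$ is a positive diagonal matrix.

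The part I expect to go through cleanly is the control of the coefficient signs. If a real $k\times k$ matrix $B$ is Hurwitz-stable, its characteristic polynomial factors into linear factors $(\lambda+r)$ with $r>0$ and quadratic factors $(\lambda^2+p\lambda+q)$ with $p,q>0$, so all its coefficients are positive; evaluating at $\lambda=0$ gives $\sign\det B=(-1)^k$. Applying this to each Hurwitz-stable component $G[\pmb{\kappa}]$ with $|\kappa|=k$ yields $\sign\det G[\pmb{\kappa}]=(-1)^k$, i.e.\ $(-1)^k\det G[\pmb{\kappa}]>0$. The Cauchy--Binet expansion of Obs.~\ref{SIobs:label}, $c_k=\sum_{\pmb{\kappa}}\det G[\pmb{\kappa}]$ over all $k$-CS, then shows that every summand carries the sign $(-1)^k$, whence $(-1)^k c_k\ge 0$ for all admissible $\mathbf{r}'$, with strict inequality whenever a $k$-CS exists. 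Consequently the normalized polynomial $\lambda^{|M|}+\sum_{k\ge 1}(-1)^k c_k\,\lambda^{|M|-k}$ has nonnegative coefficients and therefore no positive real root, so $G(\mathbf{r}')$ never has a real eigenvalue of positive real part.

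The hard part, and the reason the statement remains a conjecture, is that correct coefficient signs are only \emph{necessary} and not sufficient for Hurwitz-stability once $|M|\ge 3$: one must also verify the higher Routh--Hurwitz determinant inequalities, which are polynomial combinations mixing the $c_k$ of \emph{different} orders. The Cauchy--Binet machinery controls each $c_k$ individually as a sum over child selections, but it gives no handle on these cross-order products, which do not decompose along child selections. Equivalently, the remaining danger is a complex-conjugate pair $\pm i\omega$ with $\omega\ne 0$ reaching the imaginary axis, a Hopf-type crossing that the sign information cannot exclude; this is exactly the mechanism behind the oscillatory examples of \textbf{BOX~B} and \textbf{BOX~C}. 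To close the gap I would attempt a continuity argument over the connected (convex) set of admissible $\mathbf{r}'$, anchored at a near-degenerate configuration dominated by a single elementary component as in the proof of Lemma~\ref{SIlemma:CSinst}, and track the number of eigenvalues with positive real part; since the only possible boundary crossing is purely imaginary, the decisive step is to show that $\Re g(i\omega)=0$ and $\operatorname{Im} g(i\omega)=0$ have no common positive solution $\omega$ for any admissible $\mathbf{r}'$. Controlling this resultant in terms of the child-selection structure is precisely where an idea beyond Cauchy--Binet appears to be needed, and I expect this to be the main obstacle.
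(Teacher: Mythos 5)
There is nothing to compare your attempt against: the statement is an open \emph{conjecture} in the paper, which explicitly says it ``remains an open question'' whether the converse of Cor.~\ref{SIcor:CSinst} holds. So the relevant question is whether your attempt closes the gap, and it does not --- as you yourself say. What you do establish is a correct and nontrivial partial result: absence of D-unstable cores forces every Child-Selection matrix $S[\pmb{\kappa}]$, hence every elementary CB component $G[\pmb{\kappa}]=S[\pmb{\kappa}]R[\pmb{\kappa}]$, to be non-Hurwitz-unstable, so $\sign\det G[\pmb{\kappa}]\in\{(-1)^{k},0\}$; the expansion $c_k=\sum_{\pmb{\kappa}}\det G[\pmb{\kappa}]$ of Obs.~\ref{SIobs:label} then gives $(-1)^{k}c_k\ge 0$ for every admissible $\mathbf{r}'$, whence $g(\lambda)\ge\lambda^{|M|}>0$ for $\lambda>0$ and $G(\mathbf{r}')$ never has a real positive eigenvalue. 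The genuine gap is the one you name: coefficient signs cannot exclude a complex-conjugate pair with positive real part, and the Routh--Hurwitz conditions mix coefficients of different orders in a way that does not decompose along Child-Selections. Your proposed completion (a homotopy in $\mathbf{r}'$ anchored at a near-degenerate configuration, excluding purely imaginary crossings via the resultant of $\Re g(i\omega)$ and $\operatorname{Im}g(i\omega)$) is a sensible program, but controlling that resultant structurally is precisely the open problem; nothing in the paper's Cauchy--Binet machinery supplies it, and the oscillatory instabilities of \textbf{BOX B} and \textbf{BOX C} show that Hopf-type crossings are exactly the phenomenon any proof must handle.

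One technical slip to fix in your first paragraph: absence of a D-unstable core is \emph{not} equivalent to every $S[\pmb{\kappa}]$ being D-stable. It is equivalent to every $S[\pmb{\kappa}]$ being \emph{not D-unstable}, which still permits eigenvalues on the imaginary axis, whereas D-stability excludes them (the paper's two notions are not complementary). Consequently your factorization of the characteristic polynomial only yields nonnegative (not positive) linear and quadratic coefficients, i.e.\ the weak inequalities $(-1)^{k}c_k\ge 0$, and your claim of strict inequality whenever a $k$-CS exists is unjustified --- a Child-Selection matrix under the hypothesis may well be singular. Fortunately the weak inequalities are all that the positivity of $g$ on $\lambda>0$ requires, so the partial conclusion survives; but as stated your equivalence is false, and the distinction matters elsewhere in the paper (a D-unstable core can itself be Hurwitz-stable).
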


\section{Unstable-positive and unstable-negative feedbacks}\label{sec:feedbacks}

\begin{definition}
  Let $\pmb{\Gamma}$ be a network with a parameter-rich kinetic model.  An
  \emph{unstable-positive feedback} is an unstable core satisfying
  $\displaystyle \sign\det S[\pmb{\kappa}]=(-1)^{k-1}$; an
  \emph{unstable-negative feedback} is an unstable core satisfying
  $\displaystyle \sign\det S[\pmb{\kappa}]=(-1)^{k}$.
\end{definition}
\begin{obs}\label{obs:positivefeedbackinstability}
  If $S[\pmb{\kappa}]$ is an unstable-positive feedback then the associated
  elementary CB-component $G[\pmb{\kappa}]=S[\pmb{\kappa}]R[\pmb{\kappa}]$
  satisfies $\sign \det G[\pmb{\kappa}]=\sign \det
  S[\pmb{\kappa}]=(-1)^{k-1}$ and thus $G[\pmb{\kappa}]$ is unstable for
  any choice of symbols in $R[\pmb{\kappa}]$.
\end{obs}

As an example, consider two networks with the following stoichiometric 
matrices
\begin{equation}
  S^{+} = 
\begin{pmatrix}
    -1 & 0 & 0 & 0 & 2\\
    2 & -1 & 0 & 0 & 0\\
    0 & 2 & -1 & 0 & 0\\
    0 & 0 & 2 & -1 & 0\\
    0 & 0 & 0 & 2 & -1
\end{pmatrix}
\qquad\qquad
S^{-} = 
\begin{pmatrix} 
    -1 & 0 & 0 & 0 & -2\\
    2 & -1 & 0 & 0 & 0\\
    0 & 2 & -1 & 0 & 0\\
    0 & 0 & 2 & -1 & 0\\
    0 & 0 & 0 & 2 & -1
\end{pmatrix},
\end{equation} 
where $\det S^{+}=31$ and $\det S^{-}=-33$, corresponding to
unstable-positive and unstable-negative feedbacks, respectively. In fact:
the Hurwitz-instability of $S^+$ is clear from the sign of its determinant,
while the instability of $S^-$ can be checked by computing its eigenvalues
$(\lambda_1,\lambda_2,\lambda_3,\lambda_4,\lambda_5)\approx(-3,-1.63\pm
1.90 i, 0.62\pm 1.18 i),$ where $\lambda_4,\lambda_5$ have positive-real
part. Any proper principal submatrix of both $S^+$ and $S^-$ is triangular
with negative diagonal; thus Hurwitz-stable. This concludes that both
matrices are unstable cores: $S^+$ an unstable-positive feedback, $S^-$ an
unstable-negative feedback.

\begin{lemma}
  \label{lem:feedback}
  Unstable-positive feedbacks have a single real-positive
  eigenvalue, unstable-negative feedbacks have no real-positive
  eigenvalues.
\end{lemma}
\begin{proof}
  The characteristic polynomial of a real $|M|\times|M|$ matrix $A$ can be
  written as $p(\lambda)= \sum_{k=0}^{|M|} (-1)^k c_k \lambda^{|M|-k}$
  where $c_k$ is the sum of the principal minors of size $k$. If $A$ is an
  unstable core, then no proper principal submatrix of $A$ has one single
  positive eigenvalue, and thus the sign of every principal minor of size
  $k<|M|$ is either $(-1)^k$ or $0$. Thus $(-1)^k c_k\ge 0$. The
  coefficients of the characteristic polynomial $p$ therefore exhibit at
  most one sign change.  All coefficients have the same sign if
  $(-1)^{|M|}\det A>0$ and thus $\sign\det A = (-1)^{|M|}$, i.e., if $A$ is
  an unstable-negative feedback.  Otherwise, $(-1)^{|M|}\det A<0$ and thus
  $\sign\det A = (-1)^{|M|-1}$ and $A$ is an unstable-positive
  feedback. Descartes' Rule of Sign, see e.g.\ \cite{Wang:04}, states that
  the number of positive roots of a polynomial is either the number of
  sign-changes between consecutive coefficients, ignoring vanishing
  coefficients, or is less than it by an even number. Thus an unstable core
  has a single real-positive eigenvalue if it is an unstable-positive
  feedback and no real-positive eigenvalues if it is an unstable-negative
  feedback.
\end{proof}

We emphasize that Lemma~\ref{lem:feedback} does not exclude additional
unstable pairs of complex conjugated eigenvalues. In particular, the
unstable dimension of an unstable-positive feedback may be greater than
one. For an example, see SM Section 4. It is necessarily odd, comprising one real eigenvalue and pairs of
complex conjugated eigenvalues.
Consider {finally} the matrix property $\mathbb{P}_{\Re}$ of having one real-positive
eigenvalue. We refer to Child-Selection matrices $S[\pmb{\kappa}]$ that are
minimal w.r.t.\ $\mathbb{P}_{\Re}$ as \emph{generalized-unstable-positive
feedbacks}. {We include in the SM, Section 2, few further observations on generalized-unstable-positive feedbacks}.

\section{Autocatalytic cores as unstable cores and Metzler matrices}\label{sec:autocatMetzler}

A reaction $j_A$ is \emph{explicitly autocatalytic} if it is explicitly
catalytic and $0<{s}^{j_{A}}_m < \tilde{s}^{j_{A}}_m$, for a species
$m$. Autocatalysis can also be distributed over a sequence of reactions
that collectively exhibit ``network autocatalysis''
\cite{Smith:09,Smith:13}.  The literature does not provide a single, widely
accepted definition of ``network autocatalysis'', see \cite{DefAut20}.
Here we are inspired by the approach by Blokhuis, Lacoste, and Nghe
\cite{blokhuis20}, albeit with some adjustments in formalism and
terminology.

\begin{definition}\label{def:autocatalysis}
  Let $\kappa \subseteq M$, $\iota \subseteq E$. A $|\kappa| \times
  |\iota|$ submatrix $S'$ of the stoichiometric matrix $S$ is an
  \emph{autocatalytic matrix} if the following two conditions are
  satisfied:
  \begin{enumerate}
  \item there exists a positive vector $v\in\mathbb{R}^{|\iota|}_{>0}$ such
    that $S'v>0$,
  \item for every reaction column $j$ there exist entries $m,\tilde{m}$, not
    necessarily distinct, such that $m$ is a reactant of $j$, i.e.,
    $s^j_m>0$ and $\tilde{m}$ is a product of $j$,
    i.e. $\tilde{s}^j_{\tilde{m}}>0$.
  \end{enumerate}
\end{definition}

For the case $m=\tilde{m}$, Def.~\ref{def:autocatalysis} recovers the
definition of an explicitly-autocatalytic reaction $j_A$, where $S'$ is a
$1\times 1$ matrix $S'=S_{mj_A}=\tilde{s}^{j_A}_m-s^{j_A}_m>0$. Analogously
to Def.~\ref{def:unstablecore} we can consider `core' matrices that are
minimal with the property of being autocatalytic.
\begin{definition}
  A stoichiometric submatrix $A$ is an \emph{autocatalytic core} if it is
  autocatalytic and it does not contain a proper submatrix that is
  autocatalytic.
  \label{def:autocore}
\end{definition}

We recall that a square matrix with non-negative off-diagonal elements is
known as a Metzler matrix (see e.g.\ \cite{Duan:21}). We are now ready to
state the main theorem.
\begin{theorem}\label{thm:autocoremain}
  A submatrix $\tilde{A}$ of the stoichiometric matrix $S$ is an
  autocatalytic core if and only if $\tilde{A}$ is a $k\times k$ matrix
  with a reordering $A$ of its columns such that $A=S[\pmb{\kappa}]$ is
  a Metzler matrix and an unstable-positive feedback.
\end{theorem}
We postpone the lengthy proof to Section~\ref{sec:NC_Proof}. 
Thm.~\ref{thm:autocoremain} directly implies the following observation and
corollary.
\begin{obs}
  A reaction $j_{aut}$, explicitly autocatalytic in a species $m$,
  is thus a $1\times 1$ autocatalytic core $A:=\tilde{s}^{j_{aut}}_m-s^{j_{aut}}_m>0$. Also
  the converse is true: If $S[\pmb{\kappa}]$ has a positive diagonal entry
  $S[\pmb{\kappa}]_{mm}>0$, then $J(m)$ is an explicitly-autocatalytic
  reaction in $m$. Minimality then implies that a $k\times k$
  autocatalytic core $A$ with $k>1$ always identifies a Metzler matrix with
  non-positive diagonal.
\end{obs}

\begin{corollary}\label{cor:noautocat}
  No unstable-negative feedback is an autocatalytic matrix.
\end{corollary}
\begin{proof}
  An unstable-negative feedback is never an unstable-positive feedback. Thus, they are never autocatalytic.
\end{proof}
We call a network $\pmb{\Gamma}$ \emph{autocatalytic} if its stoichiometric
matrix $S$ contains a submatrix that is an autocatalytic core. This can
also be expressed as a corollary from Thm.~\ref{thm:autocoremain}.
\begin{corollary}\label{cor:autocatnetwork}
  A network is autocatalytic if and only if there exists a $k$-CS
  $\pmb{\kappa}$ such that the associated Child-Selection matrix
  $S[\pmb{\kappa}]$ is an unstable Metzler matrix.
\end{corollary}
\begin{proof}
If a network is autocatalytic, Thm.~\ref{thm:autocoremain} implies the existence of a matrix $S[\pmb{\kappa}]$ that is both a Metzler
matrix and unstable. Conversely, reasoning as in the proof of
Thm.~\ref{thm:autocoremain}, if $S[\pmb{\kappa}]$ is an unstable Metzler
matrix, it contains an irreducible Metzler matrix that is also unstable,
i.e. autocatalytic, and thus $S$ contains an autocatalytic core and the
network is autocatalytic.
\end{proof}
We now draw the final dynamical conclusion from Cor.~\ref{cor:autocatnetwork} 
together with Cor.~\ref{cor:CSinst}:
\begin{corollary} 
  Every autocatalytic network admits instability.
\end{corollary}

Clearly, the converse is not true: Autocatalysis is not necessary for
instability. Both the presence of an unstable-positive and an
unstable-negative feedback are sufficient conditions for the network to
admit instability, again as a consequence of Cor.~\ref{cor:CSinst}. By
Cor.~\ref{cor:noautocat}, furthermore, unstable-negative feedbacks are
never autocatalytic. Moreover, it is easy to construct unstable-positive
feedbacks that are not Metzler matrices and thus not
autocatalytic. Consider for instance the following three examples:
\begin{equation}
S_1=\begin{pmatrix}
-1 & -2\\
-1 & -1
\end{pmatrix};\quad\quad
S_2=\begin{pmatrix}
-1&0&1\\
-1&-1&0\\
1&-1&-1
\end{pmatrix};\quad\quad
S_3=\begin{pmatrix}
-1&1&1&0\\ 
-1&-1&0&1\\ 
1&1&-1&0\\ 
1&0&1&-1
\end{pmatrix}.
\end{equation}
All three matrices are unstable with a real-positive eigenvalue, since
$\det{S_1}=-1$, $\det{S_2}=1$, and $\det{S_3}=-2$.  They do not contain
proper unstable submatrices. This is straightforward to see for $S_1$ and
$S_2$, since all the proper principal submatrices are weakly diagonally
dominant with negative diagonal, and thus by Gershgorin's circle theorem
\cite{Gers:31} no proper principal submatrix can have an eigenvalue with
a positive-real part. The same argument applies to the principal submatrices
of $S_3$ with sizes 1 and 2.  A direct computation (omitted here for
brevity) shows that none of the four principal submatrices of $S_3$ with
size 3 is Hurwitz-unstable. Hence $S_3$ does not contain an unstable proper
submatrix. Then, considering $S_1=S[\pmb{\kappa}_1]$,
$S_2=S[\pmb{\kappa}_2]$, $S_3=S[\pmb{\kappa}_3]$, for $k$-Child-Selections
$\pmb{\kappa}_1$, $\pmb{\kappa}_2$, $\pmb{\kappa}_3$, we conclude that
$S_1$, $S_2$, and $S_3$ are unstable-positive feedbacks. On the other hand,
they are not autocatalytic because they are not Metzler matrices.

The matrices $S_1$ and $S_2$ have ``twin'' autocatalytic cores 
\begin{equation}
A_1=\begin{pmatrix}
-1 & 2\\
1 & -1
\end{pmatrix};\quad\quad
A_2=\begin{pmatrix}
-1&0&1\\
1&-1&0\\
1&1&-1
\end{pmatrix},
\end{equation}
in the following sense:
\begin{definition} 
  A pair $(A,B)$ of $k\times k$ matrices is a called a \emph{twin-pair} if
  $\displaystyle 
  \prod_{m\in\kappa'} A_{m\gamma(m)}=\prod_{m\in\kappa'} B_{m\gamma(m)}$
  holds for every cyclic permutation $\gamma$ on a set $\kappa'$ of size
  $k'=1,\dots,k$.
  \label{eq:twins}
\end{definition}
Twin matrices are in particular similar as they share the same characteristic polynomial and thus the same eigenvalues, but even more is true. 
\begin{lemma}\label{lem:twin}
Let $(A,B)$ be a twin-pair of matrices and $D$ be any diagonal matrix
$D$. Then $(AD, BD)$ is a twin pair of matrices.
\end{lemma}
\begin{proof} We have  
$$\displaystyle
\prod_{m\in\kappa'} A_{m\gamma(m)}d_{mm}= \prod_{m\in\kappa'}
A_{m\gamma(m)}\prod_{m\in\kappa'} d_{mm}=\prod_{m\in\kappa'}
B_{m\gamma(m)}\prod_{m\in\kappa'} d_{mm}=\prod_{m\in\kappa'}
B_{m\gamma(m)}d_{mm}$$ whenever $\gamma$ is a cyclic permutation on $\kappa'$.
\end{proof}

The unstable-positive feedback $S_3$, on the other hand, has no twin
autocatalytic core.  Since only a single off-diagonal entry is negative,
Def.~\eqref{eq:twins} can never hold for a pair $(S_3, A_3)$ where $A_3$ is a
Metzler matrix because the parity of the negative signs can never match.
See SM, Section 3.1 and 3.2, for examples of pairs of
networks that contain twins unstable core $(S_1,A_1)$ and $(S_2,A_2)$.  The
definition of twins does not seem to offer a clear chemical interpretation
behind sharing the stability features. This underlines that similarities in
the global dynamics of networks possessing twin cores are not to be
expected, a priori.

\section{Examples}\label{sec:example}

\begin{multicols}{2}
\begin{mdframed}[backgroundcolor=yellow!15,rightline=false,leftline=false]
\textbf{EXAMPLE A: Superlinear growth with unstable-positive feedback.}\\

\begin{minipage}{0.5\columnwidth}
\begin{align*} 
       A+B &\underset{1}{\rightarrow} C\\
       2A+B &\underset{2}{\rightarrow} D\\
       C &\underset{3}{\rightarrow}\\
\end{align*}
\end{minipage}%
\begin{minipage}{0.5\columnwidth}
\begin{align*} 
       D &\underset{4}{\rightarrow}\\
       \quad\quad\quad\;\; &\underset{F_A}{\rightarrow} A\\
    \quad\quad\quad \;\;&\underset{F_B} {\rightarrow} B\\
\end{align*}
\end{minipage}\\
The unique unstable core
\begin{equation}\label{twinable1}
\begin{blockarray}{ccc}
& 1 & 2 \\
\begin{block}{c(cc)}
  A & \;-1 & -2\; \\
  B & \;-1 & -1\;\\
\end{block}
\end{blockarray}.\end{equation}
is an unstable-positive feedback. It is not a Metzler matrix, thus the
network is non-autocatalytic. The associated parameter-rich system admits
multistationarity in the form of two equilibria, one stable and one unstable.
On the heteroclinic orbit connecting the two equilibria, we see superlinear
growth of the concentration $x_A$.
 \begin{center}
\includegraphics[width=\textwidth]{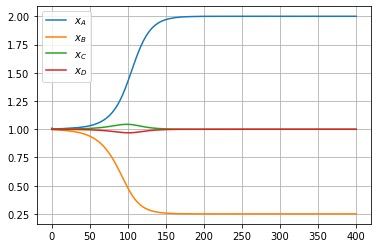}
\end{center}
See SM, Section 3.1.
\end{mdframed}

\begin{mdframed}[backgroundcolor=yellow!15,rightline=false,leftline=false]
\textbf{EXAMPLE B: Oscillations with unstable-positive feedback.}\\

\begin{minipage}{0.5\columnwidth}
\begin{align*}       
    A+B &\underset{1}{\rightarrow} C+E\\
    B+C &\underset{2}{\rightarrow} E\\
    C+D &\underset{3}{\rightarrow} A+E\\
    2B+D &\underset{4}{\rightarrow} E\\
    \end{align*}
\end{minipage}%
\begin{minipage}{0.5\columnwidth}
\begin{align*}
       E &\underset{5}{\rightarrow}\\
    &\underset{F_A}{\rightarrow} A\\
&\underset{F_B}{\rightarrow} B\\
  &\underset{F_D}{\rightarrow} D\\
\end{align*}
\end{minipage}\\
The unique unstable core
\begin{equation} \label{pfna}
\begin{blockarray}{cccc}
& 1 & 2 & 3\\
\begin{block}{c(ccc)}
A & \;-1 & 0 & 1\;\\
B & \;-1 & -1 & 0\;\\
C &  \;1 & -1 & -1\;\\  
\end{block}
\end{blockarray}  
\end{equation}
is an unstable-positive feedback. It is not a Metzler matrix and thus
the network is non-autocatalytic.
The associated parameter-rich system admits oscillations.
\begin{center}
  \includegraphics[width=\textwidth]{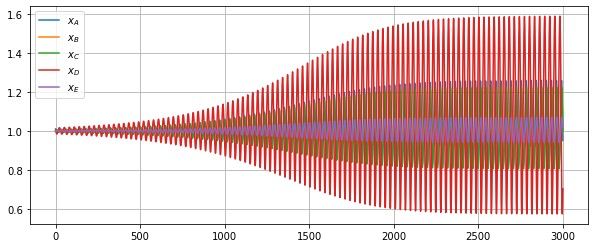}
\end{center}

\begin{center}
  \includegraphics[width=\textwidth]{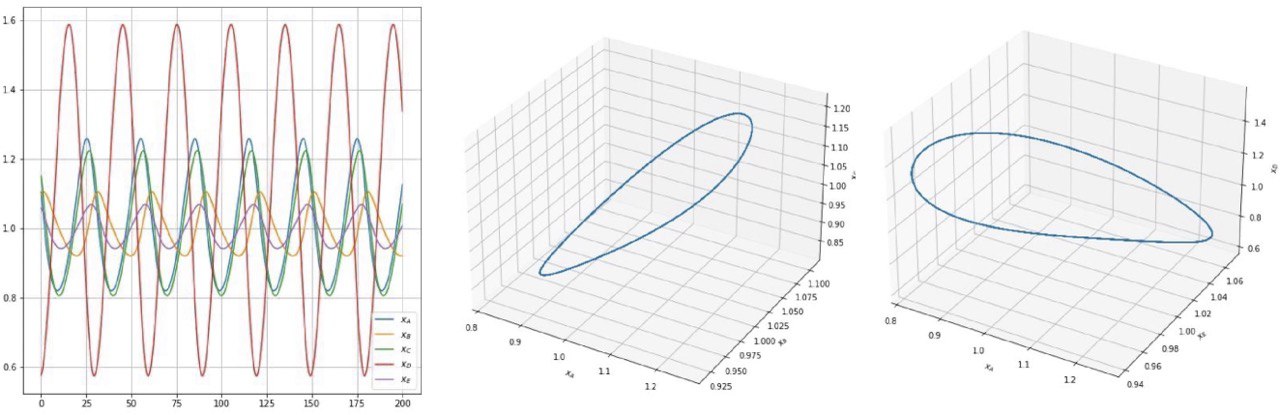}
\end{center}

See SM, Section 3.2.
\end{mdframed}
\end{multicols}

\newpage

\begin{multicols}{2}
\begin{mdframed}[backgroundcolor=yellow!15,rightline=false,leftline=false]
\textbf{EXAMPLE C: Oscillations with unstable-negative feedback.}\\
\begin{minipage}{0.5\columnwidth}
\begin{align*}       
    A+B &\underset{1}{\rightarrow} A+F\\
    B+C &\underset{2}{\rightarrow} B+F\\
    C+D &\underset{3}{\rightarrow} C+F\\
    D+E &\underset{4}{\rightarrow} D+F\\
    E+A &\underset{5}{\rightarrow} E+F\\
\end{align*}
\end{minipage}%
\begin{minipage}{.5\columnwidth}
\begin{align*}
   &\underset{F_A}{\rightarrow} A\\
&\underset{F_B}{\rightarrow} B\\
  &\underset{F_C}{\rightarrow} C\\
  &\underset{F_D}{\rightarrow} D\\
 &\underset{F_E}{\rightarrow} E\\       
    F&\underset{6}{\rightarrow}\\
\end{align*}
\end{minipage}\\
The
unique unstable core
\begin{equation*}
\begin{blockarray}{cccccc}
& 1 & 2 & 3 & 4 & 5\\
\begin{block}{c(ccccc)}
\\
  A & 0 & 0 & 0 & 0 & -1\\
  B & -1 & 0 & 0 & 0 & 0\\
  C & 0 & -1 & 0 & 0 & 0\\
  D & 0 & 0 & -1 & 0 & 0\\
  E & 0 & 0 & 0 & -1 & 0\\
  \\
\end{block}
\end{blockarray}
\end{equation*}
is an unstable-negative feedback, hence non-autocatalytic. The associated
parameter-rich system admits oscillations.
\begin{center}
\includegraphics[width=\columnwidth]{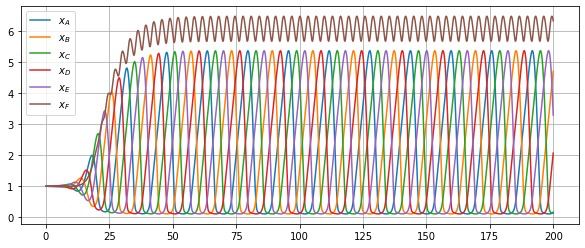}
\end{center}
\begin{center}
\includegraphics[width=\textwidth]{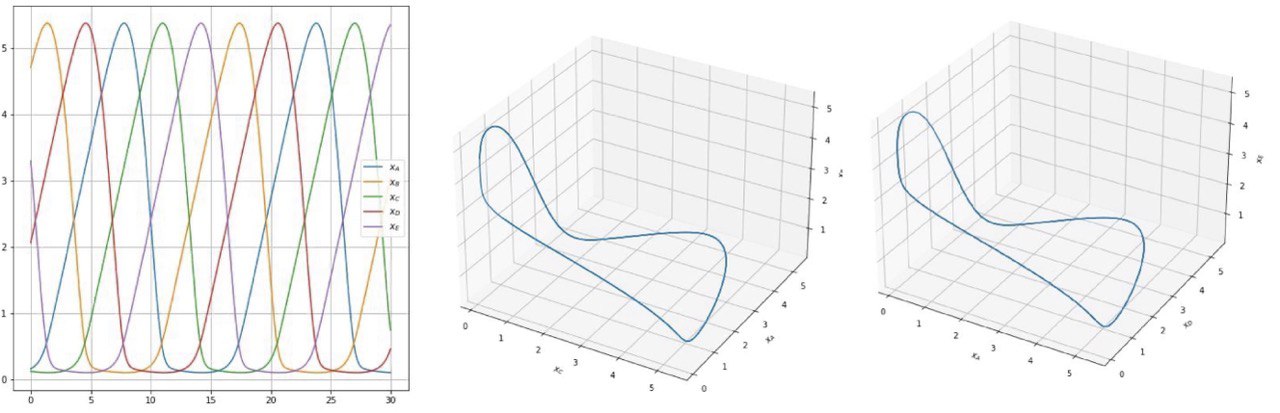}
\end{center}
See SM, Section 3.3, where we also present a mass-action variation of the same example.
\end{mdframed}

\begin{mdframed}[backgroundcolor=yellow!15,rightline=false,leftline=false]
    \textbf{EXAMPLE D: Non-autocatalytic unstable-positive feedbacks in the
      dual futile cycle.}\\
    \begin{align*} \label{doublephosphorilation}
\begin{cases}
A+E_1 \overset{1}{\underset{2}{\rightleftharpoons}} I_1 \underset{3}{\longrightarrow} B+E_1 \overset{7}{\underset{8}{\rightleftharpoons}} I_3 \underset{9}{\longrightarrow} C+E_1;\\
A+E_2 \underset{6}{\longleftarrow} I_2 \overset{5}{\underset{4}{\rightleftharpoons}} B+E_2 \underset{12}{\longleftarrow} I_4 \overset{11}{\underset{10}{\rightleftharpoons}} C+E_2 .
\end{cases}
\end{align*}
The network has only three unstable-positive feedbacks.
\begin{equation*}
\begin{blockarray}{cccccc}
& 1 & 7 & 9 & 4 & 6\\
\begin{block}{c(ccccc)}
\\
A & -1 & 0 & 0 & 0 & 1\\
E_1 & -1 & -1 & 1 & 0& 0\\
I_3 & 0 & 1 & -1 & 0 & 0\\
B & 0 & -1 & 0 & -1 & 0\\
I_2 & 0 & 0 & 0 & 1 & -1\\
\\
\end{block}
\end{blockarray}
\end{equation*}
\begin{equation*}
\begin{blockarray}{cccccc}
& 10 & 4 & 6 & 7 & 9\\
\begin{block}{c(ccccc)}
\\
C & -1 & 0 & 0 & 0 & 1\\ 
E_2 & -1 & -1 & 1 & 0& 0\\
I_2 & 0 & 1 & -1 & 0 & 0\\
B & 0 & -1 & 0 & -1 & 0\\
I_3 & 0 & 0 & 0 & 1 & -1\\
\\
\end{block}
\end{blockarray}
\end{equation*}
\begin{equation*}
\begin{blockarray}{ccccccc}
& 1 & 7 & 9 & 10 & 4 & 6\\
\begin{block}{c(cccccc)}
\\
A & -1 & 0 & 0 & 0 & 0 & 1\\
E_1 & -1 & -1 & 1 & 0& 0 & 0\\
I_3 & 0 & 1 & -1 & 0 & 0 & 0\\
C & 0 & 0 & 1 & -1 & 0 & 0\\
E_2 & 0 & 0 & 0 & -1 & -1 & 1\\
I_2& 0 & 0 & 0 & 0 & 1 & -1\\
\\
\end{block}
\end{blockarray}
\end{equation*}
They are all non-autocatalytic since they are not Metzer-matrices, but they
all admit an autocatalytic twin. Even when
endowed with a kinetics that is not parametric-rich, as Mass Action kinetics, the
network is indeed known to admit instability in the form of two stable and one
unstable equilibria \cite{HellRendall:2015}. See SM, Section 3.4.
\end{mdframed}
\end{multicols}

\section{NC-networks and what is special about autocatalysis}\label{sec:whatisspecial}

Lemma~\ref{lem:twin} together with Cor.~\ref{cor:CSinst} suggests that an
autocatalytic unstable-core is indistinguishable from a non-autocatalytic
unstable-core, as far as its (in)stability properties are
concerned. The special role of autocatalysis in chemical reaction
  networks thus is not captured by spectral properties. In this section we
  provide an intuitive explanation. To this end,
%
{we limit our attention to networks without explicitly
  catalytic reactions, i.e., without reactions where a species is
  both a reactant and a product. We refer to such networks as
  non-explicitly-catalytic networks (in brief, NC-networks) and we list few
  advantages of restricting attention to NC-networks. First, by Eq.~\eqref{S}, the stoichiometric matrix $S$ unambiguously represents the stoichiometry of the system.  Second, by Def.~\ref{def:kineticmodel}, the reactivity matrix $R$ has positive entries $R_{jm}>0$ if and only if $s^j_m$ is negative and thus if and only if $S_{mj}$ is negative. As a direct consequence, by Eq.~\eqref{def:csmatrixentries}, Child-Selection matrices $S[\pmb{\kappa}]$ have always strictly negative diagonal in NC-networks. Consider now a CB-summand $G([\kappa,E_\kappa])=S[\kappa,E_\kappa]R[\kappa,E_\kappa]$. Our second observation above implies that there exists a reordering of the columns of $S[\kappa,E_\kappa]$ that is a Metzler-matrix if and only if there exists a reordering of the columns of $R[\kappa,E_\kappa]$ that is diagonal. Expanding on the above considerations,} the following two corollaries of
Thm. ~\ref{thm:autocoremain} point at the peculiarity of autocatalysis among
unstable cores.
\begin{corollary}\label{cor:diagonal}
  Let $\pmb{\Gamma}^{NC}$ be a NC-network. Consider a CB-summand
$G([\kappa,E_\kappa])=S[\kappa,E_\kappa]R[\kappa,E_\kappa]$.  If
  $S[\kappa,E_\kappa]$ is an autocatalytic core, then
\begin{equation}\label{eq:cbautinst}
    \det R[\kappa,E_\kappa]=\sgn(J) \prod_{m\in\kappa} r'_{J(m)m} \quad
    \text{and} \quad \sign(\det S[\kappa,E_\kappa])=\sgn(J)(-1)^{k-1},
  \end{equation} 
  for a unique Child-Selection bijection $J$.
\end{corollary}
\begin{proof}
{
  By Thm.\ref{thm:autocoremain}, there exists a reordering $A$ of the columns of $S[\kappa,E_\kappa]$ such that $A$ is a Metzler matrix and an unstable-positive feedback. 
  Since $\pmb{\Gamma}^{NC}$  is a NC-network, there exists a diagonal
  reordering of $R[\kappa,E_\kappa]$. The reordering follows precisely from
  the unique Child-Selection bijection $J$ that associates a species $m$ in $\kappa$
  to the unique reaction $J(m)$ in $E_\kappa$ with $s^{J(m)}_m<0$. In particular, there is a unique CS-triple $\pmb{\kappa}=(\kappa,E_\kappa, J)$ defined on the pair of sets ($\kappa$, $E_\kappa$). Thus,
  since the determinant is an alternating form, we conclude that $\det R[\kappa,E_\kappa]=\sgn(J) \prod_{m\in\kappa} r'_{J(m)m}$. Since the
  reordering $A=S[\pmb{\kappa}]$ of $S[\kappa,E_\kappa]$ is an
  unstable-positive feedback,  \eqref{eq:jperm} implies that $\sign(\det S[\kappa,E_\kappa])=\sgn(J)(-1)^{k-1}$.}
\end{proof}
Corollary \ref{cor:diagonal} implies that the stability of the CB-summand
associated with the stoichiometry of an autocatalytic core corresponds to the
stability of a single elementary CB-component. An interesting consequence
of Cor.~\ref{cor:diagonal} is the following.

\begin{corollary}\label{cor:alwaysunstable}
  Let $\pmb{\Gamma}^{NC}$ be a NC-network. Consider a CB-summand
  $G([\kappa,E_\kappa])=S[\kappa,E_\kappa]R[\kappa,E_\kappa]$, where
  $S[\kappa,E_\kappa]$ is an autocatalytic core. Then,
  $G([\kappa,E_\kappa])$ is unstable for any choice of symbols $r'_{jm}$
  with $m\in \kappa$, $j\in E_\kappa$.
\end{corollary}
\begin{proof}
  From Cor.~\ref{cor:diagonal} we compute the sign of the determinant of
  the CB-summand $G([\kappa,E_\kappa])$,
  \begin{equation}
    \sign \det G([\kappa,E_\kappa])=\sign( \det S[\kappa,E_\kappa] \det
    R[\kappa,E_\kappa]) = \sgn(J) \sgn(J) (-1)^{k-1} = (-1)^{k-1},
\end{equation}
  and thus $G([\kappa,E_\kappa])$ is always unstable, independently of the
  choice of symbols $r'_{jm}$.
\end{proof}

In essence, autocatalysis provides an ``always-unstable'' subnetwork,
independent of any parameter choice. This has relevant consequences for the
dynamical analysis, in particular when dealing with realistic parameter
values or with kinetic models that are not parameter-rich, such as mass
action. {For non-autocatalytic unstable-cores, in fact, the
  subnetwork identified by the associated CB-summand does not need to
    be unstable for all parameter values: It requires a careful tuning of
  the parameters to achieve instability, as addressed in
  Cor.~\ref{cor:CSinst}, and such tuning may not be possible in contexts
  that are not parameter-rich.} We further remark, however, that the
converse of Cor.~\ref{cor:alwaysunstable} is not true: There are also
non-autocatalytic subnetworks whose associated Jacobian is unstable for
every choice of parameters. To see this, consider the stoichiometric matrix
$S_3$, already discussed in the previous section, and associated reactivity
matrix $R_3$
\begin{equation}S_3=\begin{pmatrix}
-1&1&1&0\\ 
-1&-1&0&1\\ 
1&1&-1&0\\ 
1&0&1&-1
\end{pmatrix}
  \quad\quad\quad
  R_3=
  \begin{pmatrix}
    r'_{1A} & r'_{1B} & 0 & 0\\
    0 & r'_{2B} & 0 & 0\\
    0 & 0 & r'_{3C} & 0\\
    0 & 0 & 0 & r'_{4D}
  \end{pmatrix}.
\end{equation}
Since $\det S_3=-1$ and $\det R_3=r'_{1A}r'_{2B}r'_{3C}r'_{4D}>0$, we have that $\sign \det G_3=\sign (\det S_3 \det R_3) = -1$, and thus $G_3$
is unstable irrespective of the choices of symbols.

As a final caveat, we caution the reader about exchanging statements
between {NC-networks and general networks.
Such exchanges may work well for considerations based on the stoichiometry,
as we will address in Section \ref{sec:NC_Proof}. In general, however, the
exchanges fail for the dynamics, which also depends explicitly on the reactivity matrix $R$}. In particular,
both Cor.~\ref{cor:diagonal} and Cor.~\ref{cor:alwaysunstable} do not hold
for general networks with explicitly-catalytic reactions. {The only viable way to extend this point of view to general networks would be to enforce the statement of Cor.~\ref{cor:diagonal} as the starting definition of autocatalytic cores, and we do not pursue this approach here.
Still, even in the present context, both corollaries} offer
a qualitative hint and explanation of the special nature of autocatalysis
and a possible reason for the omnipresence of autocatalysis in networks
that have been of interest in the literature.

\section{Conclusion}\label{sec:conclusion}
For a broad class of kinetic models, which includes Michaelis--Menten and
Generalized Mass Action kinetics (but not classical Mass Action kinetics),
we have shown here that {an inspection of the topology of a chemical
  reaction network may be conclusive on whether the network admits
  dynamical instability}. More precisely, we found that unstable cores,
characterized as certain minimal submatrices of the stoichiometric matrix
that are Hurwitz unstable, provide a sufficient condition for network
instability by Cor.~\ref{CSinstability}.  Moreover, we conjecture that the
  slightly more general D-unstable cores are even necessary; Conjecture
  \ref{conjecture}.

The present study thus complements investigations into sufficient
conditions for universal stability as \emph{deficiency zero} \cite{Fei87},
or exclusion of multiple equilibria as \emph{injectivity} \cite{Ba-07,
  BaCra10} and \emph{concordance} \cite{ShiFei12}. See \cite{Fei19, BaPa16}
for two comprehensive overviews on such topics. Note, however, that
injectivity and concordance only concern the uniqueness of equilibria, but
not their stability, while the conditions for the deficiency-zero theorem
are quite restrictive and they only hold for weakly-reversible mass-action
networks. See \cite{Muller:12} for generalizations and failures of the
deficiency statements for a parameter-rich model such as Generalized Mass
Action kinetics.

We observed that there are two classes of unstable cores, unstable-positive
and unstable-negative feedbacks, distinguished by the sign of their
determinant as well as the presence of real-valued positive eigenvalue.
Autocatalytic cores, as defined in \cite{blokhuis20}, turn out to be
exactly (after a suitable permutation
of the columns) unstable-positive feedbacks that are Metzler matrices. 
This simple characterization lends support to this particular definition of autocatalysis, which also has been
adapted e.g.\ in \cite{Despons:23,Gagrani:23}.

While positive and negative feedbacks are distinguished by spectral
properties, this is not the case for autocatalytic versus non-autocatalytic
unstable cores. The existence of twins shows that there are both types of
cores sharing the same characteristic polynomial.  Nevertheless,
autocatalysis is the source of instability in well-known examples of
chemical networks with ``interesting'' dynamics. This is probably a
consequence of the fact that an autocatalytic core is dynamically unstable
irrespective of the parameter choices (Cor.~\ref{cor:diagonal} and
Cor.~\ref{cor:alwaysunstable}). In order to design an example of dynamic
instability, it therefore suffices to pick an autocatalytic core and
complement it with feed and waste product to achieve a modicum of chemical
realism. This, together with chemists' intuition on the importance of
autocatalysis, may explain why non-autocatalytic examples do not appear to
be widely known.

\section{Proof of Theorem \ref{thm:autocoremain}}
\label{sec:NC_Proof}
{For a proof of Thm.~\ref{thm:autocoremain},} it is
convenient to remove momentarily explicit catalysis from the network.  {As addressed in Sec.~\ref{sec:whatisspecial}}, this
has the advantage that the definition of autocatalysis then relies on the
sign structure of the matrix alone and does not require a reference to the
numerical values of the stoichiometric coefficients. To this end, we split
every explicitly-catalytic reaction $j_C$, with $s^{j_C}_m\tilde{s}^{j_C}_m
\neq 0$ for at least one species $m$, into two reactions $j_{C1}$, $j_{C2}$
such that $s^{j_{C1}}_m=s^{j_C}_m$, and
$\tilde{s}^{j_{C2}}_m=\tilde{s}^{j_C}_m$, for every $m$. That is, the
reactants of $j_C$ coincide with the reactants of $j_{1C}$ and the products
of $j_C$ coincide with the products of $j_{2C}$. An \emph{intermediate
species} $m_I$ is added, so that $m_I$ is the single product of reaction
$j_{1C}$ and the single reactant of reaction $j_{2C}$. For instance, the 
explicitly-catalytic reaction $j_C$
\begin{equation}
   m_1 + m_2 \overset{j_C}{\longrightarrow} m_1 + m_3
\end{equation}
is split into $j_{C1}$, $j_{C2}$ as
\begin{equation}
  m_1 + m_2 \overset{j_{C1}}{\longrightarrow} m_I
  \overset{j_{C2}}{\longrightarrow}m_1 + m_3
\end{equation}
with the addition of the intermediate species $m_I$. We call the
replacement of an explicitly-catalytic reaction $j_C$ with the
\emph{non-explicitly-catalytic triple} $(m_I,j_{C1},j_{C2})$ a
\emph{non-explicitly-catalytic extension} of $j_C$. Clearly, we can
generalize this procedure to all explicitly-catalytic reactions in the
network.
\begin{definition}[Non-explicitly-catalytic extension]
  A network $\pmb{\Gamma}^{NC}$ is a \emph{non-explicitly-catalytic
  extension} (in short, NC-extension) of $\pmb{\Gamma}$ if
  $\pmb{\Gamma}^{NC}$ is obtained from $\pmb{\Gamma}$ by substituting all
  explicitly-catalytic reactions $j_C$ with triples $(m_I,j_{C1},j_{C2})$.
  Let $S$ be the stoichiometric matrix of $\pmb{\Gamma}$. Then $S^{NC}$
  indicates the stoichiometric matrix of $\pmb{\Gamma}^{NC}$.
\end{definition}

\begin{obs}
  Let $\pmb{\Gamma}$ be a network with $|M|$ species and $|E|$ reactions,
  of which $n$ are explicitly-catalytic. Then $\pmb{\Gamma}^{NC}$ is a
  network with $(|M|+n)$ species and $(|E|+n)$ reactions. Moreover, if
  $S$ is the $|M|\times|E|$ stoichiometric matrix of $\pmb{\Gamma}$, the
  stoichiometric matrix $S^{NC}$ of $\pmb{\Gamma}^{NC}$ is an
  $(|M|+n)\times(|E|+n)$ matrix. In particular, if $S$ is a square matrix,
  so it is $S^{NC}$.
\end{obs}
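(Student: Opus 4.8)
The plan is to treat the NC-extension as the composition of $n$ independent local modifications, one for each explicitly-catalytic reaction, and to track the bookkeeping of species and reactions under a single such modification. First I would fix one explicitly-catalytic reaction $j_C$ and observe that the \emph{non-explicitly-catalytic triple} $(m_I,j_{C1},j_{C2})$ replacing it adds exactly one fresh species, namely the intermediate $m_I$, while replacing the single reaction $j_C$ by the two reactions $j_{C1}$ and $j_{C2}$. Thus a single extension step increases the species count by $1$ and the reaction count by $2-1=1$.

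Next I would argue that these local steps do not interfere with one another. By construction each intermediate species $m_I$ is introduced as a genuinely new symbol, so the intermediates added for distinct explicitly-catalytic reactions are pairwise distinct and distinct from the original species in $M$; likewise the split reactions $j_{C1},j_{C2}$ are new and disjoint across the different $j_C$. Consequently the $n$ extension steps contribute additively, yielding $|M|+n$ species and $|E|-n+2n=|E|+n$ reactions in $\pmb{\Gamma}^{NC}$. The dimension of the stoichiometric matrix then follows immediately from the definition of $S^{NC}$ as the stoichiometric matrix of $\pmb{\Gamma}^{NC}$: since rows are indexed by species and columns by reactions, $S^{NC}$ is an $(|M|+n)\times(|E|+n)$ matrix. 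For the final assertion I would simply note that if $S$ is square then $|M|=|E|$, whence $|M|+n=|E|+n$ and $S^{NC}$ is square as well.

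Since every step is pure counting, there is no analytic obstacle here; the only point that deserves care—and the only place where the argument could silently fail—is the bookkeeping claim that the modifications are genuinely disjoint, i.e., that no intermediate species or split reaction is inadvertently identified with a pre-existing species or reaction, or with one created for another catalytic reaction. This is guaranteed by the convention that each $m_I$ is a \emph{fresh} intermediate, so the one thing I would make explicit, rather than leave implicit, is precisely this disjointness of the added symbols.
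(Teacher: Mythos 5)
Your proof is correct and takes essentially the same approach as the paper, which states this as an observation without proof precisely because it follows from the definition by exactly this counting: each replacement of an explicitly-catalytic reaction $j_C$ by the triple $(m_I,j_{C1},j_{C2})$ adds one fresh species and a net of one reaction, and the $n$ replacements are disjoint. Your explicit remark on the disjointness of the added intermediates and split reactions is the only point the paper leaves implicit, and it is the right point to make explicit.
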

The notion of NC-extension is consistent with the definition of
autocatalytic core. Let $A$ be an autocatalytic core in $\pmb{\Gamma}$ and
call $A^{NC}$ the NC-extension of $A$ in $\pmb{\Gamma}^{NC}$. We have the
following lemma.
\begin{lemma}\label{lem:extension}
  $A$ is an autocatalytic core of $\pmb{\Gamma}$ if and only if $A^{NC}$ is
  an autocatalytic core of $\pmb{\Gamma}^{NC}$.
\end{lemma}
\begin{proof}
  Let $A$ be a $|\kappa| \times |\iota|$ stoichiometric submatrix of
  $\pmb{\Gamma}$. Without loss of generality, we can consider that
  $\pmb{\Gamma}$ has one single explicitly-catalytic reaction $j_C$ in
  $\iota$, since we can inductively iterate the argument for any
  explicitly-catalytic reaction. Then, let $(m_I, j_{1C}, j_{2C})$ be its
  associated non-explicitly-catalytic triple. $A^{NC}$ is a
  $(|\kappa|+1)\times(|\iota|+1)$ matrix.

  Assume first that $A$ is an autocatalytic matrix of
  $\pmb{\Gamma}$. Consider the positive vector $v\in
  \mathbb{R}^{|\iota|}_{>0}$ such that $Av>0$ and define
  $\tilde{v}(\varepsilon)\in \mathbb{R}^{|\iota|+1}_{>0}$ as
  \begin{equation}
  \tilde{v}_j(\varepsilon):=
  \begin{cases}
    v_j \quad\quad\quad\;\; \text{for $j\neq j_{1C}$, $j\neq j_{2C}$}\\
    v_{j_C}+\varepsilon\quad\;\text{for $j=j_{1C}$} \\
    v_{j_C} \quad\quad\quad \text{for $j=j_{2C}$}
  \end{cases},
  \end{equation}
  for positive $\varepsilon>0$. This implies that
  $(A^{NC}\tilde{v})_m=(Av)_m - \varepsilon$ for $m\neq m_I$ with
  $s^{j_C}_m=s^{j_{1C}}_m>0$, $(A^{NC}\tilde{v})_m=(Av)_m$ for $m\neq m_I$
  with $s^{j_C}_m=s^{j_{1C}}_m=0$, and
  $(A^{NC}\tilde{v})_{m_I}=\varepsilon$. In particular, for small enough
  $\varepsilon>0$, we have that $A^{NC}\tilde{v}>0$. Moreover, the column
  $j_C$ of $A$ has an entry $m$ with $s^{j_C}_m>0$. This implies that the
  entry $m$ of the column $j_{1C}$ of $A^{NC}$ satisfies
  $s^{j_{1C}}_m>0$. On the other hand the entry $m_I$ of the column
  $j_{1C}$ satisfies $\tilde{s}^{j_{1C}}_{m_I}>0$. Similarly, the column
  $j_{C}$ has an entry $\tilde{m}$ with $\tilde{s}^{j_C}_{\tilde{m}}>0$;
  thus the entry $\tilde{m}$ of the column $j_{2C}$ of $A^{NC}$ satisfies
  $\tilde{s}^{j_{2C}}_{\tilde{m}}>0$. On the other hand the entry $m_I$ of
  the column $j_{2C}$ satisfies $s^{j_{2C}}_{m_I}>0$, concluding that
  $A^{NC}$ is an autocatalytic matrix of $\pmb{\Gamma}^{NC}$.

  Assume now that $A^{NC}$ is an autocatalytic matrix of
  $\pmb{\Gamma}^{NC}$. There exists a positive vector $\tilde{u} \in
  \mathbb{R}^{|\iota|+1}_{>0}$ with $A^{NC}\tilde{u}>0$. Consider the
  positive vector $\tilde{v}\in \mathbb{R}^{|\iota|+1}_{>0}$ defined as
  $\tilde{v}_j:=\tilde{u}_j$ for all $j\neq j_{1C}$ and
  $\tilde{v}_{j_{1C}}:=\tilde{v}_{j_{2C}}=\tilde{u}_{j_{2C}}$. We show that
  $(A^{NC}\tilde{v})_m \ge (A^{NC}\tilde{u})_m>0$ for all $m\neq m_I$. In
  fact, note that $\tilde{u}_{j_{1C}}>\tilde{u}_{j_{2C}}$, otherwise
  $(A^{NC}\tilde{u})_{m_I}\le 0$ would contradict $A^{NC}\tilde{u}>0$. On
  the other hand, the column $j_{1C}$ of $A^{NC}$ has only negative
  entries, with the single exception of the entry $m_I$ with $A^{NC}_{m_I
    j_{1C}}=1$. This yields
  $\tilde{v}_{j_{2C}}(A^{NC}_{mj_{1C}}+A^{NC}_{mj_{1C}})\ge
  \tilde{u}_{j_{1C}}A^{NC}_{mj_{1C}}+\tilde{u}_{j_{2C}}A^{NC}_{mj_{2C}}$
  and thus $(A^{NC}\tilde{v})_m\ge(A^{NC}\tilde{u})_m>0$ for all $m\neq
  m_I$. Since $m_I$ is not a species in $\pmb{\Gamma}$, it does not appear
  as a row in $A$. Thus we conclude that for the positive vector $v \in
  \mathbb{R}^{|\iota|}_{>0}$ defined as $v_j:=\tilde{v}_j$ for $j\neq j_C$
  and $v_{j_C}:=\tilde{v}_{j_{1C}}=\tilde{v}_{j_{2C}}$ we obtain $A v >
  0$. Moreover, the column $j_{1C}$ of $A^{NC}$ contains an entry $m\neq
  m_I$ with $s^{j_{1C}}_m>0$ and the column $j_{2C}$ contains an entry
  $\tilde{m}\neq m_I$ with $\tilde{s}^{j_{2C}}_m>0$. This implies that the
  column $j_C$ of $A$ has as well the entry $m$ with $s^{j_C}_m>0$ and the
  entry $\tilde{m}$ with $\tilde{s}^{j_C}_{\tilde{m}}>0$, and concludes
  that $A$ is an autocatalytic matrix of $\pmb{\Gamma}$.

  The above proves that a matrix $A$ is autocatalytic in $\pmb{\Gamma}$ if
  and only if $A^{NC}$ is autocatalytic in $\pmb{\Gamma}^{NC}$. The
  preservation of minimality follows from a straightforward
  observation. $B$ is a submatrix of $A$ in $\pmb{\Gamma}$ if and only if
  $B^{NC}$ is a submatrix of $A^{NC}$ in $\pmb{\Gamma}^{NC}$. Thus, $A$
  does not contain any autocatalytic proper submatrix if and only if
  $A^{NC}$ does not contain any autocatalytic proper submatrix.
\end{proof}

Since the autocatalytic cores of a network $\pmb{\Gamma}$ can be
studied in its NC-extension $\pmb{\Gamma}^{NC}$ as a consequence of
Lemma~\ref{lem:extension}, we proceed now by considering NC-networks, only. The
advantage, {we repeat,} is that in NC-networks the stoichiometric matrix $S$ fully
determines the reactivity matrix $R$, $R_{jm} \neq 0$ if and only if
$S_{mj}<0$. A first consequence of this approach is that
autocatalytic matrices and cores can be defined purely at the matrix level,
without reference to the stoichiometric coefficients. We have indeed the
following proposition.

\begin{prop}\label{prop:autocatalysisncat}
  Consider a NC-network $\pmb{\Gamma}^{NC}$. Let $\kappa \subseteq M$,
  $\iota \subseteq E$. Then a $|\kappa| \times |\iota|$ submatrix $A$ of
  the stoichiometric matrix $S$ is an \emph{autocatalytic core} if the
  following three conditions are satisfied:
  \begin{enumerate}
  \item there exists a positive vector $v\in\mathbb{R}^{|\iota|}_{>0}$ such
    that $Av>0$,
  \item for every column $j$ there exist both positive and negative
    entries, i.e., there is $m,\tilde{m}$ such that $A_{mj}>0$ and
    $A_{\tilde{m}j}<0$,
  \item $A$ does not contain any proper submatrix satisfying conditions (i)
    and (ii).
  \end{enumerate}
\end{prop}
\begin{proof}
  For NC-networks, a reactant $m$ of a reaction $j$ always appears as a
  strictly negative entry $A_{mj}<0$, whereas a product $\tilde{m}$ of
  reaction $j$ always appears as a strictly positive entry
  $A_{\tilde{m}j}>0$. The rest follows from the definition of
  autocatalytic cores.
\end{proof}
Prop.~\ref{prop:autocatalysisncat} is clearly equivalent to
Def.~\ref{def:autocore} for the case of NC-networks. In \cite{blokhuis20}
the analysis focused on NC-networks, and the conditions in
Prop.~\ref{prop:autocatalysisncat} were indeed given as the definition of
autocatalytic cores. We further note that in NC-networks autocatalysis
requires at least two species and two reactions, since explicit
(auto)catalysis is excluded.
\begin{lemma}\label{lemma:size2}
  If $S$ is a $|\kappa|\times|\iota|$ autocatalytic matrix of a NC-network,
  then $|\kappa|\ge2$ and $|\iota|\ge 2$.
\end{lemma}
\begin{proof}
  Condition 2 in Prop.~\ref{prop:autocatalysisncat} implies that there are
  at least 2 rows: $|\kappa|\ge 2$. Then $|\iota|=1$ is impossible since we
  would have $S_{m1}<0$, and thus condition 1 could not be satisfied; hence
  $|\iota|\ge 2$.
\end{proof}

For NC-networks, \cite{blokhuis20} proves that autocatalytic cores necessarily satisfy relevant conditions, as stated in the following Lemma:
\begin{lemma}[Blokhuis et al. \cite{blokhuis20}] \label{lem:Nghe}
  Let $\tilde{A}$ be an autocatalytic core of a NC-network. Then
  $\tilde{A}$ is square, invertible, and irreducible. Moreover, there
  exists an autocatalytic core $A$, obtained by re-ordering the columns of
  $\tilde{A}$, such that $A$ is a Metzler matrix with negative diagonal.
\end{lemma}

For completeness, we prove Lemma \ref{lem:Nghe} in our own setting and
notation in the SM, Section 1.3. A straightforward
corollary follows.
\begin{corollary}\label{cor:unique}
  Let $\tilde{A}$ be an autocatalytic core of a NC-network. Then, there
  exists a \emph{unique} autocatalytic core $A$ with negative diagonal
  obtained by reordering the columns of $\tilde{A}$.
\end{corollary}
\begin{proof}
  Since $A$ is a Metzler matrix, each column and row has a unique negative
  entry. Thus each column and each row of $\tilde{A}$ has a unique negative
  entry, and the permutation of the columns with negative diagonal is
  uniquely defined.
\end{proof}

In summary, an autocatalytic core of a NC-network uniquely identifies an
invertible, irreducible Metzler matrix $A$ with negative
diagonal. Hurwitz-stability of Metzler matrices has been extensively
studied in the literature, see e.g.\ \cite{Narendra:10} and the references
therein. The notorious Frobenius--Perron theorem is typically stated for
non-negative matrices $N$, i.e., with $N_{mj} \ge 0$ for all $m$ and
$j$. By considering $\alpha \ge\max_m |A_{mm}|$ and the non-negative matrix
$N:=A+\alpha \operatorname{Id}$, some important consequences of the theorem
generalize to Metzler matrices.  In particular, the Frobenius--Perron
theorem guarantees that any Metzler matrix $A$ has one real eigenvalue
$\lambda^*$ such that $\lambda^*>\Re\lambda$ for all other eigenvalues
$\lambda$. Moreover, if $A$ is irreducible, then the eigenvector $v$ of
$\lambda^*$ can be chosen positive. Another straightforward consequence is
that $Av\ge av$ for some $a>0$ implies $\lambda\ge a$. See e.g.
\cite{Bullo20}, where also the following consequent statement can be
found. 
\begin{lemma}\label{Metzlerlemma}
  Let $A$ be an invertible, irreducible, Metzler matrix. Then $A$ is
  Hurwitz-unstable if and only if there exists a positive $v>0$ such that
  $Av>0$.
\end{lemma}
To make this document self-contained, we include a short proof in the SM, Section 1.4. Lemma \ref{Metzlerlemma} implies that
within the set of invertible and irreducible Metzler matrices with negative
diagonal, Hurwitz-instability characterizes autocatalytic
matrices. We expand on such an idea in the last lemma of this section, where
we characterize autocatalytic cores among the Metzler matrices with
negative diagonal.

\begin{lemma}\label{minimality}
  Let $A$ be a $k\times k$ Metzler matrix with
  negative diagonal. Then the following two statements are equivalent:
  \begin{enumerate}
  \item $A$ is an autocatalytic core;
  \item $\sign\det A=(-1)^{k-1}$ and
    for every $\emptyset\ne\kappa'\subsetneq\{1,\dots,k\}$ we have  
    $\sign\det A[\kappa']=(-1)^{|\kappa'|}$ or
    $\det A[\kappa']=0$. 
  \end{enumerate}
\end{lemma}
\begin{proof}
  Suppose $A$ is an autocatalytic core. The proof of Lemma
  \ref{Metzlerlemma} shows that $A$ is Hurwitz-unstable with at least one
  real-positive eigenvalue $\lambda^*$. Descartes' rule of sign therefore
  implies that the coefficients $c_n$ of the characteristic polynomial
  $p(\lambda)= \sum_{n=0}^k (-1)^n c_n \lambda^{k-n}$ of $A$ do not all
  have the same sign. Since $c_1=\tr A<0$, there exists a smallest $n$,
  $1<n\le k$ and $\kappa'\subset\{1,2,\dots,n\}$ with $|\kappa'|=n$ such
  that the submatrix $A[\kappa']$ satisfies $\sign\det
  A[\kappa']=(-1)^{|\kappa'|-1}$ and $\sign\det
  A[\kappa'']=(-1)^{|\kappa''|}$ or $\sign\det A[\kappa'']=0$ for all
  non-empty $\kappa''\subsetneq\kappa'$. We proceed by showing that
  $A[\kappa']$ is an autocatalytic core. If $A[\kappa']$ is not
  irreducible, the rows and columns can be simultaneously reordered such
  that $A[\kappa']$ has a block-triangular form. Thus there is
  $\kappa''\subsetneq\kappa'$ such that $\sign\det
  A[\kappa'']=(-1)^{|\kappa''|-1}$, contradicting minimality of $
  \kappa'$.  Thus $A[\kappa']$ is irreducible and thus every column
  contains a non-zero off-diagonal entry. Since $A$ is a Metzler matrix
  with negative diagonal, every column in particular contains both a
  positive and a negative entry. Since $\sign\det
  A[\kappa']=(-1)^{|\kappa'|-1}\ne 0$, $A[\kappa']$ is invertible. It is
  unstable, since by construction its characteristic polynomial has exactly
  one sign change.  Lemma~\ref{Metzlerlemma} now implies the existence of a
  positive vector $v\in\mathbb{R}^{|\kappa|'}_{>0}$ such that $A[\kappa'] v
  >0$. Thus $A[\kappa']$ is an autocatalytic core and cannot be a proper
  submatrix of the network autocatalytic core $A$. Hence $A[\kappa']=A$ and
  $|\kappa'|=k$.

  Conversely, assume that condition 2 holds. We first show that $A$ is
  invertible and irreducible. Invertibility trivially follows from $\det A
  \neq 0$. Indirectly assume now that $A$ is reducible; then the
  determinant of $A$ can be expressed as a product of determinants of
  principal submatrices, which would imply the existence of a proper
  principal submatrix $A[\kappa']$ with $\sign \det
  A[\kappa']=(-1)^{|\kappa'|-1}$, contradicting the assumption; thus $A$ is
  irreducible. The sign of its determinant further implies that $A$ is
  unstable and thus Lemma \ref{Metzlerlemma} implies the existence of a
  positive vector $v\in\mathbb{R}^{k}_{>0}$ such that $A v>0$.  Moreover,
  the irreducibility of $A$ again guarantees that every column of $A$ has
  one negative diagonal entry and at least one positive off-diagonal entry.
  Thus $A$ is autocatalytic. It remains to show that there are no
  autocatalytic submatrices. The only submatrices that could be
  autocatalytic cores are the principal ones since an autocatalytic core
  uniquely identifies a square matrix with one negative entry in each
  column by Cor.~\ref{cor:unique}. Moreover, we can restrict to the
  irreducible ones, by Lemma~\ref{lem:Nghe}. Again, all invertible
  and irreducible principal $|\kappa'|$-submatrices $A[\kappa']$ with
  $|\kappa'|<k$ satisfy $\sign\det A[\kappa']=(-1)^{|\kappa'|}$ by
  assumption. By Lemma~\ref{lem:feedback}, none of them has any positive
  real eigenvalue.  Following the arguments in the proof of Lemma
  \ref{Metzlerlemma}, $A[\kappa']$ is Hurwitz-stable and thus there is no
  positive vector $v\in \mathbb{R}^{|\kappa'|}_{>0}$ such that $A[\kappa']
  v>0$. Therefore, $A$ does not contain a proper submatrix that is
  autocatalytic and hence it is an autocatalytic core.
\end{proof}

We are now in the position to state the main result for NC-networks.
\begin{lemma}\label{autocatcorethm}
  Let $S$ be a stoichiometric matrix of a NC-network. Then a submatrix
  $\tilde{A}$ of $S$ is an autocatalytic core if and only if
  $\tilde{A}$ is a $k\times k$ submatrix of $S$ whose columns
  can be rearranged in a matrix $A$ that satisfies the following
  conditions:
  \begin{enumerate}
  \item $A$ is a Metzler matrix with negative diagonal;
  \item $\sign\det A=(-1)^{k-1}$;
  \item no principal submatrix $A[\kappa']$ with $|\kappa'|<k$ is
    Hurwitz-unstable.
\end{enumerate}
\end{lemma}
\begin{proof}
  By Cor.~\ref{cor:unique}, an autocatalytic core $\tilde{A}$ uniquely
  identifies a matrix $A$ that is a Metzler matrix with negative diagonal.
  Lemma~\ref{minimality} implies that among the Metzler matrices with
  negative diagonal, the autocatalytic cores are exactly those that satisfy
  $\sign\det A=(-1)^{k-1}$ and for every
  $\emptyset\ne\kappa'\subsetneq\{1,\dots,k\}$ we have $\sign\det
  A[\kappa']=(-1)^{|\kappa'|}$ or $\det A[\kappa']=0$. From Descartes' rule
  of sign an identical argument as in the proof of Lemma~\ref{minimality}
  implies that no proper principal submatrix $A[\kappa']$ with
  $|\kappa'|<k$ is Hurwitz-unstable.
  
  Conversely, assume that $A$ satisfies conditions 1-3. Condition 3 implies
  that for all principal submatrices $A[\kappa']$ we have $\sign \det
  A[\kappa']=(-1)^{|\kappa'|}$ or $\det
  A[\kappa']=0$. Lemma~\ref{minimality} concludes that $A$ is an
  autocatalytic core.
\end{proof}

We focus on the relationship between autocatalytic cores and unstable
cores. We first note a straightforward corollary of
Lemma~\ref{lem:Nghe}.
\begin{corollary}
  \label{cor:omega}
  Let $A$ be a $k\times k$ autocatalytic core of an NC-network, with rows in
  $\kappa \subseteq M$ and columns in $\iota \subseteq E$. For any
  $m\in\kappa$, let $J(m)\in \iota$ indicate the unique reaction such that
  $A_{m\,J(m)}<0$. Then $\pmb{\kappa}:=(\kappa,\iota,J)$ is a $k$-CS.
\end{corollary}
\begin{proof}
  It is sufficient to notice that $A_{mJ(m)}<0$ implies $s^{J(m)}_m>0$. The
  statement follows from the definition of $k$-Child-Selection triples.
\end{proof}

Thus the notion of autocatalytic cores in \cite{blokhuis20} is consistent
with our notion of cores as minimal Child-Selections, even though the
definition of autocatalytic cores does not require a fixed order of the
columns: Any $\tilde{A}$ obtained by reordering the columns of an
autocatalytic core $A$ is also an autocatalytic core. In turn, the ordering
of the columns in the Child-Selection perspective plays a crucial role to
draw dynamical conclusions. In fact, as an immediate consequence of Cor.~\ref{cor:omega} we get that
$A:=S[\pmb{\kappa}]$ is the unique matrix with negative diagonal obtained
by reordering the columns of an autocatalytic core $\tilde{A}$. In
particular, $A$ is a Metzler matrix.

We are finally ready to prove Theorem \ref{thm:autocoremain}

\begin{proof}[Proof of Theorem \ref{thm:autocoremain}]
  Assume firstly $A=S[\pmb{\kappa}]$ is a Metzler matrix and an
  unstable-positive feedback; thus $A$ is irreducible, by
  Obs.~\ref{obs:irr}.  Lemma~\ref{Metzlerlemma} implies that there exists a
  positive vector $v\in\mathbb{R}^k_{>0}$ such that $Av>0$. By
  construction of $S[\pmb{\kappa}]$, the diagonal entries correspond to
  positive stoichiometric coefficients $s^{J(m)}_m>0$. If $k=1$, the
  instability of the $1\times1$ matrix $S[\pmb{\kappa}]$ simply implies
  that $S[\pmb{\kappa}]>0$, and thus $\tilde{s}^{J(m)}_m>0$. This concludes
  that $J(m)$ is an explicitly-autocatalytic reaction in $m$, and in
  particular $S[\pmb{\kappa}]$ is an autocatalytic core. Consider now $k\ge
  2$: $A$ is irreducible and thus every column $j$ as at least a positive
  non-diagonal entry $m$. In particular $\tilde{s}^j_m\ge A^j_m >0$; thus
  $A$ is autocatalytic. Finally, any proper principal submatrix of $A$ is
  also a Metzler matrix. By minimality of unstable-positive feedbacks, it
  does not contain any other unstable proper principal submatrix, i.e., for
  all $\kappa'$ there is no positive $v'$ such that $A[\kappa']v'>0$, again
  by Lemma~\ref{Metzlerlemma}. This implies that $A$ does not contain any
  autocatalytic submatrix, which concludes that $A$ is an autocatalytic
  core.

  Conversely assume that $A=S[\pmb{\kappa}]$ is an autocatalytic core. We
  have to show that $A$ is a Metzler matrix. Lemma~\ref{lem:extension}
  states that $A$ is an autocatalytic core if and only if its NC-extension
  $A^{NC}$ is an autocatalytic core. Using Lemma~\ref{autocatcorethm},
  $A^{NC}$ is a Metzler matrix with negative diagonal. Consider a column
  $j_C$ in $A$ associated to an explicitly-catalytic reaction $j_C$ and its
  associated non-explicitly-catalytic triple $(m_I,j_{C1},j_{C2})$ in
  $A^{NC}$. The column $j_{C1}$ has a single negative entry, i.e., a single
  reactant $m^*$ (by Lemma~\ref{lem:Nghe}) and a single product
  $m_I$, via the construction of NC-extension. Moreover, the single
  reactant $m^*$ corresponds to the diagonal entry
  $A_{m^*m^*}=S[\pmb{\kappa}]_{m^*m^*}=S_{m^*J(m^*)}$, since
  $J(m^*)=j_C$. The column $j_{C2}$ has a single negative entry, i.e., a
  single reactant $m_I$, again via the construction of NC-extension. The
  column $j_C$ in $A$, then, has entries
  $A_{mj_C}=A^{NC}_{mj_{C1}}+A^{NC}_{mj_{C2}}$, for $m \neq m_I$ since
  $m_I$ is not a species in $A$. Such sum has always non-negative summands,
  except for $m=m^*$, and $m=m_I$. The first corresponds to a diagonal
  entry in $A$, the second does not appear in $A$ at all, therefore
  $A_{mj_C}\ge 0$ for all $m\neq m^*$. Repeating such argument for all
  explicitly-catalytic reactions yields that $A_{mj}$ is non-negative for
  all non-diagonal entries $j\neq m$; thus $A$ is a Metzler
  matrix. Finally, since $A$ is autocatalytic, there exists
  $v\in\mathbb{R}^k_{>0}$ such that $Av>0$. Applying
  Lemma~\ref{Metzlerlemma} yields instability of $A$ with $\sign \det
  A=(-1)^{k-1}$ and the minimality of $A$ as an autocatalytic matrix
  translates into the minimality of $A$ as an unstable matrix, just as
  discussed above in this same proof. Hence, $A$ is an unstable-positive
  feedback and a Metzler matrix.
\end{proof}

\newpage
\begin{center}\LARGE{Supplementary Material}
\end{center}
\addcontentsline{toc}{section}{SUPPLEMENTARY MATERIAL}
\setcounter{section}{0}
The Supplementary Material (SM) is organized as follows. Section \ref{sec:SIproofsomitted} contains the proofs omitted from the main text; Section \ref{sec:SIgupf} discusses in more detail \emph{generalized-unstable-positive-feedbacks}. Section \ref{sec:SIexamples} analyzes the four examples of the main text; Section \ref{sec:SIupfex} presents an example of an unstable-positive feedback with unstable dimension greater than one.

\section*{Proofs omitted from the main text}\label{sec:SIproofsomitted}
\subsection*{Proof of Lemma~\ref{lemma:mmrich}}
\begin{proof}
Consider a positive right-kernel vector $\bar{r}\in\mathbb{R}^{|E|}_{>0}$ of the stoichiometric matrix $S$:
\begin{equation}S\bar{r}=0.\end{equation}
  Choose a sufficiently large scalar $K>0$ such that
  \begin{equation}\label{bconst}
\mathpzc{b}_m^j:=\bigg(\frac{K
      \bar{r}_j}{\bar{r}'_{jm}}\frac{s^j_m}{\bar{x}_m}-1\bigg)
    \frac{1}{\bar{x}_m}>0,
  \end{equation}
  for all $m$ with $s_{m}^{j}>0$. Set
  \begin{equation}\label{a}
    \mathpzc{a}_j \coloneqq K \bar{r}_j \prod_{m\in  M}
    \Bigg( \frac{\bar{x}_m}{(1+\mathpzc{b}^j_m \bar{x}_m)}\Bigg)^{-s^j_m},
  \end{equation}
  Inserting these parameter choices into the Michaelis--Menten nonlinearity, Eq.(\ref{MMeq}) of the main text,
$$  
  r_j(x) : =a_j\prod_{m\in M} \Bigg( \frac{x_m}{(1+b^j_m
    x_m)}\Bigg)^{s^j_m}.
$$
yields
  $r_j(\bar{x})=K\bar{r}_j>0$ and $\frac{\partial r_j(x)}{\partial
    x_m}\big|_{x=\bar{x}}=\bar{r}'_{jm}$ for all reactants $m$ of any reaction
  $j$.  In particular,
  \begin{equation*}
    Sr(\bar{x})= K \cdot S\bar{r}=0,
  \end{equation*}
  and $\bar{x}$ is an equilibrium of the
  system. Thus the conditions of Def.~\ref{def:p-rich} are satisfied.
\end{proof}
\subsection*{Detailed Cauchy-Binet analysis of Section \ref{sec:CS} of the main text}
{In this section we present the detailed Cauchy-Binet analysis  leading to Lemma \ref{lemma:CSinst} and Corollary \ref{cor:CSinst} of the main text.} The instability of a matrix is characterized by the sign of the real part
of its eigenvalues, i.e., of the roots of its characteristic polynomial. For matrices that are symbolic Jacobians of networks, the
coefficients of the characteristic polynomial can be expanded along Child
Selections: Let
\begin{equation}
    g(\lambda)=\sum_{k=0}^{|M|}(-1)^kc_k\lambda^{|M|-k}
\end{equation}
be the characteristic polynomial of the symbolic Jacobian matrix $G(
\pmb{r}')=SR(\pmb{r}')$. The coefficients $c_k$ are the sum of the
principal minors $\det{G[\kappa]}$ for all sets $|\kappa|=k$. Applying the
Cauchy--Binet formula to any principal submatrix $G[\kappa]$ of $G$ we
obtain
\begin{equation}
  \det G[\kappa] = 
  \det S[\kappa,E] R[E,\kappa]
  = \sum_{|\iota|=k}  \det
  S[\kappa,\iota] \det R[\iota,\kappa].
\end{equation}
A central observation is that all nonzero summands are associated with
Child-Selections:
\begin{lemma}\label{lemma:cb}
  Let $R[\iota,\kappa]$ be a submatrix of the reactivity matrix with 
  $|\iota|=|\kappa|=k$. Then $\det R[\iota,\kappa] \ne 0$
  only if $\iota=E_\kappa=J(\kappa)$ for some $k$-CS triple
  $(\kappa,E_{\kappa},J)$. In particular,
  \begin{equation}
    \label{eq:LeibnitzJ}
  \det R[\iota,\kappa]=\sum_{J:\kappa\mapsto
    \iota}\sgn{J}\prod_{m\in \kappa}r'_{J(m)m},
  \end{equation}
  where the sum runs on all Child-Selection bijections $J$ between $\kappa$
  and $E_\kappa:=\iota$.
\end{lemma}
\begin{proof}
  The Child-Selection bijections $J:\kappa\to\iota$ identify permutations
  of the columns; thus \eqref{eq:LeibnitzJ} coincides with the Leibniz
  formula for the determinant. To prove that this is the case, consider the
  general form of the Leibniz formula $$\det
  R[\iota,\kappa]=\sum_{\pi:\kappa\mapsto\iota}\sgn\pi
  \prod_{m\in\kappa}r'_{\pi(m)m},$$ where $\pi$ is now any bijection between
  $\kappa$ and $\iota$, not necessarily a Child-Selection bijection.  The
  statement follows by noting that $\det R[\iota,\kappa]\neq 0$ requires
  the existence of at least one permutation $\bar{\pi}:\kappa\to\iota$ with
  $\prod_{m\in\kappa}r'_{\bar{\pi}(m)m}\neq0$. Such a product is non-zero if
  and only if $r'_{\bar{\pi}(m)m}\neq 0$ for all $m\in \kappa$. By the
  definitions of $R$ and $k$-Child-Selection, this is the case if and only
  if $\bar{\pi}$ is a Child-Selection bijection $J:\kappa\to\iota$. In
  fact, $r'_{\bar{\pi}(m)m}=R_{\bar{\pi}(m)m}\neq 0$ if and only if $m$ is
  a reactant of $j=\bar{\pi}(m)$, i.e., $s^{\bar{\pi}(m)}_m\neq 0$. On the
  other hand, for a bijection $\bar{\pi}:\kappa\mapsto\iota$, the condition
  $s^{\bar{\pi}(m)}_m \neq 0$ for every $m \in \kappa$ precisely defines a
  Child-Selection bijection $J=\bar{\pi}$.
\end{proof}

Lemma~\ref{lemma:cb} implies that we can rewrite the Cauchy--Binet
expansion in the form
\begin{equation}\label{cbexpanfirst}
   \det G[\kappa]=
\sum_{E_\kappa}\det(S[\kappa,E_\kappa]R[E_{\kappa},\kappa]),
\end{equation}
where the sum runs precisely on all sets $E_\kappa \subseteq E$ for which
there exists at least one Child-Selection bijection $J$ with
$E_\kappa=J(\kappa)$. We call the matrices 
\begin{equation}\label{CBsumfirst}
G[(\kappa,E_{\kappa})]:=S[\kappa,E_\kappa]R[E_{\kappa},\kappa] 
\end{equation} 
the \emph{Cauchy--Binet (CB) summands} of $G$. We emphasize again that a
pair of sets ($\kappa$, $E_{\kappa}$) does not uniquely identify a
Child-Selection bijection, as there may be more than one bijection
$J:\kappa\to\iota$ such that $(\kappa,\iota,J)$ is a $k$-CS.  Applying
Lemma~\ref{lemma:cb}, however, we can further expand the determinant of 
CB summands along $k$-CS:
\begin{equation}\label{CBelcompfirst}
\begin{split}
  \det G[(\kappa,E_{\kappa})]&=
  \det S[\kappa,E_\kappa]
  \sum_{J:\kappa\mapsto E_\kappa}\sgn{J}\prod_{m\in \kappa}r'_{J(m)m}\\&=
  \sum_{\pmb{\kappa}} \det S[\pmb{\kappa}]
  \prod_{m\in\kappa}r'_{J(m)m}=
  \sum_{\pmb{\kappa}}\det ( S[\pmb{\kappa}]R[\pmb{\kappa}]
).
\end{split}
\end{equation}
Here, $R[\pmb{\kappa}]$ is the diagonal matrix with entries
$R[\pmb{\kappa}]_{mm}=r'_{J(m)m}.$ We call the matrices
$G[\pmb{\kappa}]:=S[\pmb{\kappa}]R[\pmb{\kappa}]$ \emph{elementary
Cauchy--Binet (CB) components} of $G$.

\begin{obs}\label{obs:label}
  Putting together \eqref{cbexpanfirst}, \eqref{CBsumfirst}, and \eqref{CBelcompfirst}, the $k^{th}$ coefficient $c_k$ of the characteristic polynomial
  $g(\lambda)$ of $G$ can be expanded along CB summands and elementary CB
  components as
\begin{equation}\label{eq:completeCBexpansionSI}
  c_k= \sum_{(\kappa,E_{\kappa})}\det G[(\kappa,E_{\kappa})] = 
  \sum_{\pmb{\kappa}}\det G[\pmb{\kappa}],
\end{equation}
where the first sum runs on all pairs of sets $(\kappa,E_\kappa)$ with
cardinality $k$ for which there exists at least one Child-Selection
bijection $J(\kappa)=E_\kappa$. The second sum runs on all $k$-CS triples
$\pmb{\kappa}$. This in particular proves that the characteristic
polynomial is independent of the labeling of the network.
\end{obs}

The following technical lemma shows that $\pmb{\Gamma}$ admits
instability for parameter-rich kinetic models whenever any CB summand
admits instability.
\begin{lemma}[Lemma \ref{lemma:CSinst} of the main text]
  Assume that $\pmb{\Gamma}=(M,E)$ is a network with a parameter-rich
  kinetic model. Assume there exists a choice of positive symbols
  $\pmb{r}'$ such that a CB summand $G[(\kappa,E_\kappa)]$ is
  Hurwitz-unstable.  Then the network admits instability.
  \label{lemma:CSinstSI}
\end{lemma}
\begin{proof}
  Since the kinetic model is parameter rich, we can choose the non-zero
  symbolic entries $\pmb{r}'$ of the reactivity matrix $R$ as a function
  of a parameter $\varepsilon$ as follows: For $m\in\kappa$ and $j\in
  E_{\kappa}$ choose $r'_{jm}>0$ independent of $\varepsilon$ and set
  $r'_{jm}= \varepsilon \rho_{jm}$ for all other $j$ and $m$ with
  $s_{m}^{j}>0$, with $\rho_{jm}>0$ any positive value. By construction,
  the square submatrix $R[E_{\kappa},\kappa]$ of $R$ comprising the rows
  $j\in E_{\kappa}$ and columns $m\in\kappa$ is independent of
  $\varepsilon$. Observation \ref{obs:label} guarantees that the stability
  does not depend on the labeling of the network. Therefore, without loss
  of generality, consider $\kappa=\{1,...,k\}$: in the limit
  $\varepsilon\to 0$, the symbolic Jacobian of $\pmb{\Gamma}$ becomes
  \begin{equation*} 
    G(\pmb{r'}(0)) = \begin{pmatrix}
      G[\kappa,E_\kappa] & 0\\ ... & 0\\
    \end{pmatrix}.
  \end{equation*}
  Thus $G(\pmb{r'}(0))$ is Hurwitz-unstable whenever the CB summand
  $G[\kappa,E_\kappa]$ is Hurwitz-unstable. Appealing to the continuity of
  eigenvalues, Hurwitz-instability of $G(\pmb{r'}(0))$ implies Hurwitz
  instability of $G(\pmb{r'}(\varepsilon))$ for sufficiently small
  $\varepsilon>0$. This in turn implies that $\pmb{\Gamma}$ admits
  instability.
\end{proof}

Moreover, fix a $k$-CS $\pmb{\kappa}=\{\kappa,E_\kappa, J\}$. We can
further choose arbitrarily small symbols $r'_{jm}=\epsilon \rho_{jm}$ with
$m\in\kappa$, $j\in E_\kappa$, and $j\ne J(m)$. The same argument as in the
proof of Lemma~\ref{lemma:CSinstSI} above yields the following.
\begin{corollary}[Corollary \ref{cor:CSinst} of the main text]
  Assume that $\pmb{\Gamma}=(M,E)$ is a network with a parameter-rich
  kinetic model. If there is a choice of positive symbols $\pmb{r}'$
  such that an elementary CB component $G[\pmb{\kappa}]$ is
  Hurwitz-unstable, then there is also a choice of positive symbols such
  that the CB summand $G[(\kappa,E_\kappa)]$ is Hurwitz-unstable. Then, in
  particular, the network admits instability.
\end{corollary}

\subsection*{Proof of Lemma~\ref{lem:Nghe}}
Here we prove Lemma \ref{lem:Nghe} of the main text. We follow in essence the arguments of Blokhuis et al. \cite{blokhuis20}, in part with alternative proofs.
\begin{lemma} [Prop.~1 in the SI of \cite{blokhuis20}]
  \label{lem:Blokhuis1}
  If $A$ is a $|\kappa| \times |\iota|$ autocatalytic core of a NC-network,
  then for every $m \in \kappa$, there are $j,j'\in \iota$ such that
  $A_{mj}<0$ and $A_{mj'}>0$.
\end{lemma}
\begin{proof}
  Since there is a positive vector $v\in \mathbb{R}^{|\iota|}_{>0}$ such
  that $\sum_jA_{mj}v_j>0$ we have $A_{mj}>0$ for some $j$. Now indirectly
  suppose there exists $m$ such that $A_{mj}\ge 0$ for all $j$. Denote by
  $P$ the set of such $m$, and consider the matrix $A'$ obtained from $A$
  by removing rows $m\in P$.  Then $\sum_j A_{m'j}v_j =\sum_j
  A'_{m'j}v_j>0$ for all $m'\not\in P$.  Moreover, for every $j$, there is
  $m'\not \in P$ such that $A_{m'j}=A'_{m'j}<0$. Denote by $I=\{j\;|
  \;A_{m'j}\le 0 \text{ for all } m'\not \in P\}$ the set of reactions for
  which all the products lie in $P$.  If $I=\emptyset$, then $A'$ is an
  autocatalytic proper submatrix of $A$, contradicting minimality of
  $A$ as an autocatalytic matrix. If $I=E$, then $\sum_{j} A_{m'j}v_j\le 0$
  for all positive vectors $v$ and all $m'\not\in P$, contradicting the
  assumption that $A$ is autocatalytic.  If $\emptyset\subsetneq
  I\subsetneq E$, denote by $A''$ the matrix obtain from $A'$ by removing
  all columns $j\in I$. Then $0<\sum_jA'_{m'j}v_j\le \sum_{j\notin I}
  A'_{m'j}v_j$, and for every $j\notin I$ there is $m'\not \in P$ such that
  $A''_{m'j}=A'_{m'j}<0$ and, by construction a $m''\not \in P$ such that
  $A''_{m''j}>0$. Hence $A''$ is an autocatalytic proper submatrix of $A$,
  again contradicting the assumption that $A$ is an autocatalytic
  core. Thus there exists no $m$ such that $A_{mj}\ge 0$ for all $j$,
  concluding the proof.
\end{proof}

\begin{lemma}
  \label{lem:removecol}
  Let $A$ be a $|\kappa|\times |\iota|$ matrix such that there is a
  positive vector $v\in\mathbb{R}^{|\iota|}_{>0}$ such that $Av>0$. If $\rk
  A< |\iota|$, then there there is a column $j$ in $A$ such that the
  $|\kappa|\times (|\iota|-1)$ matrix $A^*$ obtained by removing the column
  $j$ admits a positive vector $v^*\in\mathbb{R}^{|\iota|-1}_{>0}$ such
  that $A^*v^*>0$.
\end{lemma}
\begin{proof}
  Since $\rk A < |\iota|$, there is a non-zero right kernel vector $c$
  satisfying $Ac=0$. Thus there is $\tau\in\mathbb{R}$ such that $w=v-\tau
  c\ge 0$ and $w_j=0$ for some $j\in \iota$. Note that $j$ is not
  necessarily uniquely defined. There is, however, an open neighborhood
  $N(v)$ of $v$ in $\mathbb{R}^{|\iota|}_{>0}$ such that for all $v'\in
  N(v)$ it holds $Av'>0$ and $v'>0$. In particular, we can choose $v'\in
  N(v)$ and a unique $j\in \iota$ such that $v'_i>v_i$ for all $i\ne j$ and
  $v'_j=v_j$. The vector $v''=v'-\tau c$ now satisfies $v''_j=0$, $v''_i>0$
  for all $i\ne j$ and thus the vector $v^*\in\mathbb{R}^{|\iota|-1}$
  obtained from $v''$ by deleting row $j$ is positive, i.e., $v^*>0$.
  Moreover, we have $A^*v^* = Av'' = Av' + \tau Ac = Av' >0$.
\end{proof}
  
\begin{lemma} [Prop.~2 in the SI of \cite{blokhuis20}]
  \label{lem:Blokhuis2}
  If $A$ is a $|\kappa|\times|\iota|$ autocatalytic core of a NC-network,
  then $A$ is square and invertible. Moreover, for every species
  $m\in\kappa$ there is a unique reaction $j=j(m)\in\iota$ such that
  $A_{mj}<0$ and $A_{m j'}\ge 0$ for all $j'\ne j$.  Moreover, $j(m)$ is
  uniquely determined by $m$.
\end{lemma}
\begin{proof}
  Let $A$ be a $|\kappa|\times |\iota|$ autocatalytic core. If
  $\rk A<|\iota|$, then by Lemma~\ref{lem:removecol} above there is a column $j$
  such that the matrix $A^*$ obtained by deleting $j$ still affords a
  positive vector $v^*$ such that $(A^*v^*)_m>0$ for all $m\in
  \kappa$. Hence $A^*$ is an autocatalytic proper submatrix of $A$,
  contradicting minimality. Therefore $\rk A=|\iota|$. Now suppose there is
  $m\in \kappa$ such that for every $j\in \iota$ with $A_{mj}<0$ there is a
  species $m_j\ne m$ with $A_{m_jj}<0$ and let $A'$ be the matrix obtained
  by removing row $m$. Then $(A'v)_{m'}=(Av)_{m'}$ for all $m'\ne m$ and
  every column $j$ of $A'$ still contains a positive and a negative
  entry. Thus $A'$ is an autocatalytic proper submatrix of $A$,
  contradicting minimality. Hence, for every $m\in \kappa$ there is a
  column $j\in E$ for which $A_{mj}$ is the only negative entry. This
  implies $|\kappa|\le |\iota|$. Taken together we have $\rk A\le
  |\kappa|\le |\iota| =\rk A$, and thus $|\kappa|=|\iota|=\rk A$, i.e., $A$
  is square and invertible. Since for row $m$ there is a column $j$ such
  that $A_{mj}$ is the only negative entry in the column,
  $|\kappa|=|\iota|$ implies that $A_{mj'}\ge0$ for all $j'\ne j$ and thus
  $j=j(m)$ is uniquely determined by $m$. Moreover, every row and column of
  $A$ thus contains a unique negative entry $A_{m\,j(m)}$, and thus
  $m\mapsto j(m)$ is bijective.
\end{proof}

\begin{lemma} [Prop.~4 in the SI of \cite{blokhuis20}]
  \label{lem:Blokhuis4}
  If $\tilde{A}$ is an autocatalytic core of a NC-network, then $\tilde{A}$
  is irreducible.
\end{lemma}
\begin{proof}
  Indirectly suppose that $\tilde{A}$ is reducible.  Consider the unique
  matrix $A$ with negative diagonal obtained by reordering the columns of
  $\tilde{A}$. Clearly, $A$ is also reducible and the rows and columns can
  be simultaneously reordered such that $A=\begin{pmatrix} A' & 0 \\ X &
  Y \end{pmatrix}$, where $A'$ is an irreducible matrix corresponding to
  the subset $\kappa'\subseteq \kappa$ of reactant indices. Note that $A'$
  has negative diagonal. By assumption, $A'$ is a proper submatrix of
  $A$. Let $v$ be a positive vector such that $(Av)_m>0$ for all $m\in
  \kappa$. Then the restriction $v'$ of $v$ to $\kappa'$ is a positive
  vector such that $(A'v')_m=(Av)_m>0$ for all $m\in \kappa'$. Furthermore,
  the irreducibility of $A'$ yields that every column $j\in \kappa'$ has a
  non-zero (positive!) off-diagonal entry. Hence $A'$ is an autocatalytic
  proper submatrix of $A$, in contradiction with the assumptions. Thus $A$
  is irreducible and so it is $\tilde{A}$.
\end{proof}

Lemma~\ref{lem:Nghe} follows as a straightforward consequence of above Lemma \ref{lem:Blokhuis1}, Lemma \ref{lem:Blokhuis2}, and Lemma \ref{lem:Blokhuis4}.

\subsection*{Proof of Lemma~\ref{Metzlerlemma}}

\begin{proof}
  If $A$ is Hurwitz-unstable, then $\lambda^*>0$ and hence for its positive
  eigenvector $v>0$ we have $Av = \lambda^* v > 0$.
  For the converse, if $v>0$ is a positive vector with $Av>0$ then there
  always exists $a>0$ small enough such that the implication $Av\ge av$
  holds and hence $A$ has in particular a real-positive eigenvalue
  $\lambda\ge a>0$ and it is Hurwitz-unstable.
\end{proof}

\section*{Generalized-unstable-positive feedbacks}\label{sec:SIgupf}

{We collect here few further observations and results on generalized-unstable-positive feedbacks. First,} it is worthwhile noting that a matrix may be a generalized-unstable-positive feedback and not be an unstable core as it may contain an
unstable-negative feedback as a proper submatrix. To see this, consider the
following matrix:
\begin{equation}
    S_{g}^+=\begin{pmatrix}
    0 & 0 & -1 & 1\\
    -1 & 0 & 0 & 0\\
    0 & -1 & 0 & 0\\
    0 & 0 & 1 & 0\\
\end{pmatrix}.
\end{equation}
Clearly, $\sign\det S_{g}^+=-1=(-1)^{4-1}$ and all proper principal $k'\times k'$ submatrices $S_{g}^+[\kappa']$ satisfy either $\sign\det S_{g}^+[\kappa']=(-1)^{k'}$ or $\det S_{g}^+[\kappa']=0$. With the same reasoning of Lemma~\ref{lem:feedback} in the main text, we get that $S_{g}^+$ has one real-positive eigenvalue and no proper principal submatrix have a real-positive eigenvalue. However, the $3\times3$ principal submatrix
\begin{equation}
\begin{pmatrix}
    0 & 0 & -1\\
    -1 & 0 & 0\\
    0 & -1 & 0\\
\end{pmatrix}
\end{equation}
has eigenvalues $(-1, 0.5\pm 0.87 i)$, hence it is an unstable-negative
feedback. Using Cor.~\ref{cor:CSinst} of the main text, the presence of a
generalized-positive-unstable feedback in the stoichiometry of the network
implies that the network admits instability. Moreover, minimality
w.r.t.\ $\mathbb{P}_{\Re}$ can be checked based on the sign of the
principal minors, alone. As a consequence, finding
generalized-unstable-positive feedbacks present computational advantages
compared to finding unstable-positive feedbacks. Finally, following the
arguments in Lemma~\ref{lemma:CSinst} and Lemma~\ref{lem:feedback} in the main text, the
presence of a generalized-unstable-positive feedback characterizes the
existence of a choice of parameters such that at least one coefficient
$c_k$ of the characteristic polynomial of the Jacobian $p(\lambda)=
\sum_{k=0}^{|M|} (-1)^k c_k \lambda^{|M|-k}$ satisfies $c_k (-1)^k<0$,
which is in turn a necessary condition for multistationarity \cite{BaPa16}.\\

{As noted above, checking whether a generalized-unstable-positive feedback is an unstable-positive feedback may be computationally expensive, in practice.} 
We provide here a simple sufficient criterion that implies that a
generalized-unstable-positive feedback is an unstable-positive feedback, by
considering the related autocatalytic twin matrix, as defined in Def.~\ref{eq:twins}. {Firstly, we state a lemma, which confirms that autocatalytic generalized-positive-feedbacks are always autocatalytic core.} 
\begin{lemma}\label{cor:genaut}
  Every autocatalytic generalized-unstable-positive feedback is an
  autocatalytic core, i.e., it is an unstable-positive feedback.
\end{lemma}
\begin{proof}
  {As addressed in detail in Sec.~\ref{sec:NC_Proof} of the main text,} it is well-known that every unstable Metzler matrix has at least one real-positive eigenvalue.
  Thus, if an autocatalytic matrix $A$ is a generalized-unstable-positive
  feedback, i.e., $A$ is minimal with the property $\mathbb{P}_{\Re}$ of
  having a real-positive eigenvalue, then $A$ is minimal with the property
  of being unstable, and thus it is an unstable-positive feedback.
\end{proof}

\begin{corollary}\label{cor:twingenfeed}
Let $S$ be a generalized-unstable-positive feedback, i.e., minimal with the
property $\mathbb{P}$ of having a real-positive eigenvalue. If $S$
possesses an autocatalytic twin $A$, then $S$ is an unstable-positive
feedback.
\end{corollary}
\begin{proof}
  $A$ is a generalized-unstable-positive feedback and autocatalytic; thus it is an unstable-positive feedback via Lemma~\ref{cor:genaut}. $S$ is a
  twin-matrix of $A$, which implies that any principal submatrix of $S$
  shares the same Leibniz expansion as the respective principal submatrix
  of $A$. In particular, then, $S$ contains a proper unstable principal
  submatrix if and only if $A$ does. From the fact that $A$ is an
  unstable-core, i.e., $A$ does not contain any proper unstable principal
  submatrix, it follows that $S$ is an unstable-positive feedback.
\end{proof}

\section*{The four examples in detail}\label{sec:SIexamples}
We discuss in more detail the examples presented in \textbf{EXAMPLES A, B, C, D}
of the main text.  The examples are all non-autocatalytic and they show
multistationarity, superlinear growth, or oscillations as a consequence of
instability.  The instability can follow from two distinct features: an
unstable-positive feedback (single real-positive eigenvalue) or an
unstable-negative feedback (conjugated pairs of positive-real-part
eigenvalues).  An unstable-positive feedback can produce both
multistationarity and oscillations, while unstable-negative feedbacks do
not trigger multistationarity. For the sake of exemplification, we endow
the systems with Michaelis--Menten kinetics, which is in particular
parameter-rich. {Without a kinetics that is parameter-rich, we are not guaranteed that we can achieve any instability, of course. For example, as addressed in Section \ref{sec:parameterrich} of the main text, the spectral configurations of the equilibrium Jacobian of the associated mass action system is a subset of its spectral configurations in parameter-rich kinetic models. Thus, a dynamical feature spotted when the system is endowed with parameter-rich kinetics may not occur in mass action systems. We nevertheless provide here a mass action variation of \textbf{EXAMPLE C}, to foster discussion.} 

The ubiquitous mathematical intuition is that the presence
of an unstable core allows the detection of an unstable equilibrium for a
certain choice of Michaelis--Menten constants.  The spectral type of
instability of the Jacobian at the equilibrium, i.e. one single real
unstable eigenvalue vs a complex pair of unstable eigenvalues, suggests
multistationarity (superlinear growth) and oscillations respectively.  Note
however that the instability-type of the Jacobian need not be the same as
the instability-type of the unstable core generating it. In particular, we
stress again that an unstable-positive feedback can indeed trigger
oscillatory behavior, as it is seen in celebrated autocatalytic examples as
the \emph{Oregonator} \cite{oregonator}, or more modestly in our \textbf{EXAMPLE B} in
\ref{subsec:B}. We further warn the reader that the analysis of the
unstable core alone is inconclusive with regard to the global dynamics of
the system, for which we resort to standard numerical simulations.

\subsection*{EXAMPLE A: Superlinear growth with unstable-positive feedback}
\label{subsec:A}
Consider four species $A,B,C,D$ and six reactions:\\
\begin{minipage}{.5\textwidth}
\begin{align*}       
       A+B &\underset{1}{\rightarrow} C\\
       2A+B &\underset{2}{\rightarrow} D\\
    C &\underset{3}{\rightarrow}\\
\end{align*}
\end{minipage}%
\begin{minipage}{.2\textwidth}
\begin{align*}
       D &\underset{4}{\rightarrow}\\
       \quad\quad\quad\;\; &\underset{F_A}{\rightarrow} A\\
    \quad\quad\quad \;\;&\underset{F_B} {\rightarrow} B\\
\end{align*}
\end{minipage}\\
The stoichiometric matrix $S$ reads:
\begin{equation}
S=\begin{pmatrix}
    -1 & -2 & 0 & 0 & 1 & 0 \\
    -1 & -1 & 0 & 0 & 0 & 1\\
    1  &  0 & -1 & 0 & 0 & 0\\
    0  &  1 &  0 & -1 & 0 & 0
\end{pmatrix}.    
\end{equation}
It is straightforward to check that the matrix $S$ contains a single,
non-autocatalytic, unstable core:
\begin{equation}\label{twinable1SM}
\begin{pmatrix}
-1 & -2 \\
-1 & -1
\end{pmatrix},\end{equation}
which is an unstable-positive feedback. The idea behind the design of this
example is indeed based on the following pair of reactions:
\begin{equation}
    \begin{cases}
        A+B \underset{1}{\rightarrow}...\\ 
        2A+B \underset{2}{\rightarrow}...
    \end{cases}.
\end{equation} 
This is sufficient for the presence of the unstable-positive feedback of
\eqref{twinable1SM}.  The inflow reactions $F_A, F_B$ guarantee the
existence of a positive equilibrium. Chemical realism requires products
$C,D$ with their own degradation: outflow reactions $3, 4$. Consider the
associated dynamical system of the concentrations:
\begin{equation}\label{system_BOXA}
    \begin{cases}
    \dot{x}_A=F_A-r_1(x_A,x_B)-2r_2(x_A,x_B),\\
    \dot{x}_B=F_B-r_1(x_A,x_B)-r_2(x_A,x_B),\\  \dot{x}_C=r_1(x_A,x_B)-r_3(x_C),\\
        \dot{x}_D=r_2(x_A,x_B)-r_4(x_D).\\
    \end{cases}
\end{equation}
We endow the system with Michaelis--Menten kinetics and we find
multistationarity for a proper choice of kinetic constants. Such
multistationarity appears in the form of two equilibria, one stable $E_s$
and one unstable $E_u$. The dynamical intuition behind this is that the
system admits a saddle-node bifurcation \cite{V23}, implying a parameter
area where an unstable equilibrium coexists with a stable equilibrium. We
then simply consider initial conditions close to the unstable equilibrium
$E_u$. On the heteroclinic orbit connecting the two equilibria, we see
superlinear growth of the concentration $x_A$. More specifically, we
consider:
\begin{equation}\label{ex1mm}
    \begin{cases}
        \dot{x}_A=3-3\frac{x_Ax_B}{1+2x_B}-2x_A^2x_B,\\
        \dot{x}_B=2-3\frac{x_Ax_B}{1+2x_B}-x_A^2x_B,\\
        \dot{x}_C=3\frac{x_Ax_B}{1+2x_B}-x_C,\\
        \dot{x}_D=x_A^2x_B-x_D.\\
    \end{cases}
\end{equation}
$E_s:=(x_A,x_B,x_C,x_D)=(2,0.25,1,1)$ is a stable equilibrium while
$E_u:=(x_A,x_B,x_C,x_D)=(1,1,1,1)$ is unstable. With initial conditions
$x(0)=(1.01,1,1,1)$ we get the plot in Figure
\ref{fig_one}, where we find superlinear growth in the concentration of
$x_A$.

\begin{figure}
  \centering \includegraphics[width=0.7\textwidth]{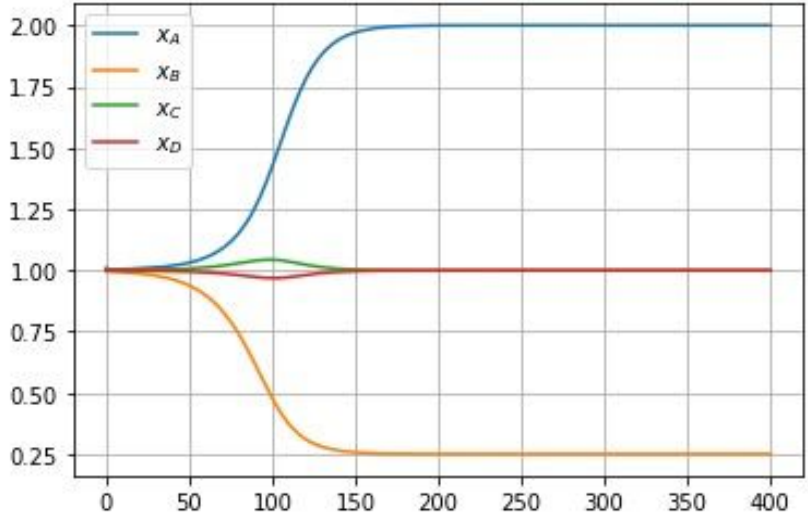}
  \caption{The system in \eqref{ex1mm} possesses an unstable equilibrium
    $E_u$ at $x=(1,1,1,1)$. In the figure, nearby initial conditions
    $x(0)=(1.01,1,1,1)$ converge to the stable equilibrium $E_s=(2, 0.25,
    1, 1).$ The trajectory of $x_A$ has a sigmoid shape with alternating
    regimes of superlinear, linear, and sublinear growth.}
    \label{fig_one}
    \end{figure}

The unstable-positive feedback in
\eqref{twinable1SM} admits an autocatalytic twin
\begin{equation}\label{twinableaut1}
\begin{pmatrix}
-1 & 2\\
1 & -1
\end{pmatrix}.
\end{equation}
With similar intuition, we can construct an example with superlinear
growth, based on \eqref{twinableaut1}, instead of
\eqref{twinable1SM}. Consider two species $A$ and $B$, and the following
four reactions.
\begin{align*}
    A &\underset{1}{\rightarrow}  \\
    A &\underset{2}{\rightarrow} B \underset{3}{\rightarrow} 2A \\
    B &\underset{4}{\rightarrow}
\end{align*}
Endowing the system with Michaelis--Menten kinetics, the following choice
of constants
\begin{equation}\label{ex1b}
    \begin{cases}
\dot{x}_A=-\frac{12x_A}{1+x_A}+8x_B-2x_A,\\
\dot{x}_B=\frac{12x_A}{1+x_A}-4x_B-\frac{8x_B}{1+3x_B},
    \end{cases}
\end{equation}
shows    multistationarity   with    superlinear    growth,   see    Figure
\ref{fig_1b}.  As  already pointed  out,  the  dynamics  of the  system  in
\eqref{ex1mm} and in \eqref{ex1b} is  indeed different. This is no surprise
since  the networks  must be  different for  stoichiometric reasons.  E.g.,
enlarging  the  stoichiometry  of the  non-autocatalytic  unstable-core  in
\eqref{twinable1SM}   to  a   network  that   admits  equilibria   requires
inflow-reactions, while for the  autocatalytic core in \eqref{twinableaut1}
the same construction requires outflow-reactions.

 \begin{figure}
 \centering
 \includegraphics[width=0.7\textwidth]{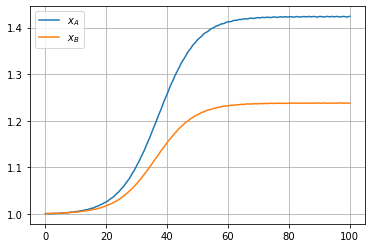}
   \caption{The system in \eqref{ex1b} possesses an unstable equilibrium
     $(x_A,x_B,x_C,x_D,x_E)=(1,1,1,1,1)$. In the figure, nearby initial
     conditions $x(0)=(1.01,1,1,1,1)$ show superlinear growth and
     convergence to a limit stable equilibrium.}
   \label{fig_1b}
\end{figure}

\subsection*{EXAMPLE B: Oscillations with unstable-positive feedback}
\label{subsec:B}
Consider five species $A,B,C,D,E$, and the following eight reactions.\\
\begin{minipage}{.6\textwidth}
\begin{align*}       
    A+B &\underset{1}{\rightarrow} C+E\\ B+C &\underset{2}{\rightarrow}
    E\\ C+D &\underset{3}{\rightarrow} A+E\\ 2B+D
    &\underset{4}{\rightarrow} E\\
\end{align*}
\end{minipage}%
\begin{minipage}{.2\textwidth}
\begin{align*}
       E &\underset{5}{\rightarrow}\\
    &\underset{F_A}{\rightarrow} A\\
&\underset{F_B}{\rightarrow} B\\
  &\underset{F_D}{\rightarrow} D\\
\end{align*}
\end{minipage}\\
The stoichiometric matrix reads
\begin{equation}
    S=
\begin{pmatrix}
    -1 & 0 & 1 & 0 & 0 & 1 & 0 & 0\\
    -1 & -1 & 0 & -2 & 0 & 0 & 1 & 0\\
    1 & -1 & -1 & 0 & 0 & 0 & 0 & 0\\
    0 & 0 & -1 & -1 & 0 & 0 & 0 & 1\\
    1 & 1 & 1 & 1 & -1 & 0 & 0 & 0\\
\end{pmatrix}.
\end{equation}
Consider the $3$-CS $\pmb{\kappa}$ defined as 
\begin{equation}
    \pmb{\kappa}=(\{A,B,C
    \},\{1,2,3\},\{J(A)=1,J(B)=2,J(C)=3\})
\end{equation} 
The network possesses a unique unstable-core,
\begin{equation} \label{pfnaSM}
S[\pmb{\kappa}]=
\begin{pmatrix}
   -1 & 0 & 1\\
    -1 & -1 & 0\\
    1 & -1 & -1   
\end{pmatrix},
\end{equation}
which is a non-autocatalytic unstable-positive feedback. In fact,
$\det S[\pmb{\kappa}]=1=(-1)^{3-1}$, and no principal
submatrix is unstable. The uniqueness can be checked via an elementary
computation, omitted here for brevity.  Note also that $S[\pmb{\kappa}]$
has a twin autocatalytic core $A$ of the form
\begin{equation}\label{autocattwin2}
A=
\begin{pmatrix}
   -1 & 0 & 1\\
    1 & -1 & 0\\
    1 & 1 & -1   
\end{pmatrix}.
\end{equation}
The associated dynamical system of the concentrations is:
\begin{equation}\label{posfeedback}
    \begin{cases}
        \dot{x}_A=F_A-r_1(x_A,x_B)+r_3(x_C,x_D),\\
        \dot{x}_B=F_B-r_1(x_A,x_B)-r_2(x_B,x_C)-2r_4(x_B,x_D),\\
        \dot{x}_C=r_1(x_A,x_B)-r_2(x_B,x_C)-r_3(x_C,x_D),\\
        \dot{x}_D=F_D-r_3(x_C,x_D)-r_4(x_B,x_D)\\
\dot{x}_E=r_1(x_A,x_B)+r_2(x_B,x_C)+r_3(x_C,x_D)+r_4(x_B,x_D)-r_5(x_E).
    \end{cases}
\end{equation}
We endow the system with Michaelis--Menten kinetics and we find oscillations for a certain choice of constants. For example, the following choice
\begin{equation}\label{ex2mm}
    \begin{cases}
        \dot{x}_A=1-8\frac{x_Ax_B}{(1+x_A)(1+x_B)}+91\frac{x_Cx_D}{1+90x_D},\\
        \dot{x}_B=5-8\frac{x_Ax_B}{(1+x_A)(1+x_B)}-91\frac{x_Bx_C}{1+90x_C}-20\frac{x_B^2x_D}{(1+x_B)^2(1+1.5x_D)},\\
        \dot{x}_C=8\frac{x_Ax_B}{(1+x_A)(1+x_B)}-91\frac{x_Bx_C}{1+90x_C}-91\frac{x_Cx_D}{1+90x_D},\\
        \dot{x}_D=2-91\frac{x_Cx_D}{1+90x_D}-10\frac{x_B^2x_D}{(1+x_B)^2(1+1.5x_D)},\\
    \dot{x}_E=8\frac{x_Ax_B}{(1+x_A)(1+x_B)}+91\frac{x_Bx_C}{1+90x_C}+91\frac{x_Cx_D}{1+90x_D}+10\frac{x_B^2x_D}{(1+x_B)^2(1+1.5x_D)}-5x_E,
    \end{cases}
\end{equation}
shows an unstable equilibrium for
$x=(1,1,1,1,1)$. Nearby initial conditions
$x(0)=(1.01,1,1,1,1)$ converge to a stable
periodic orbit. See Figure \ref{fig_two1} and Figure \ref{fig_two2}.

\begin{figure}
  \centering
  \includegraphics[width=\textwidth]{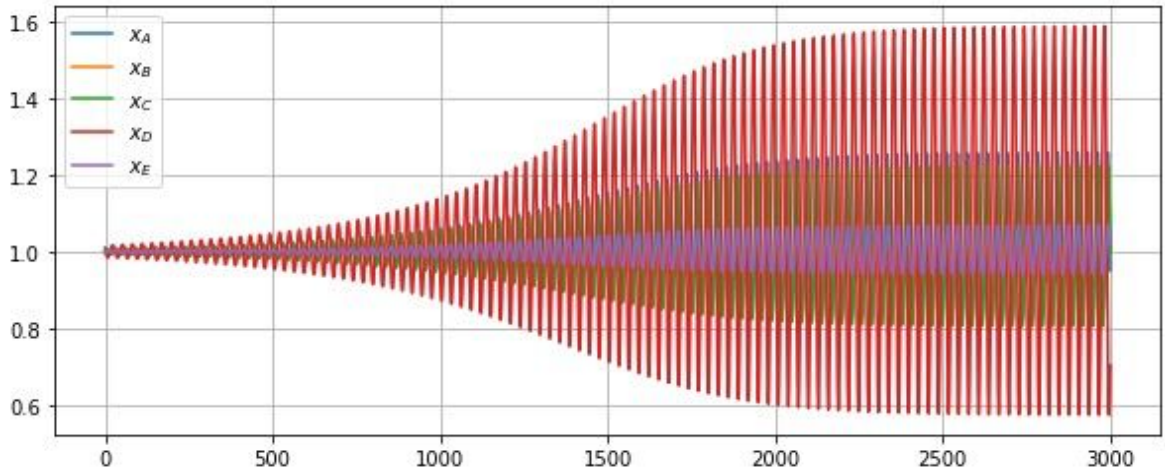}
  \caption{The system in \eqref{ex2mm} possesses an unstable equilibrium
    $(x_A,x_B,x_C,x_D,x_E)=(1,1,1,1,1)$. In the figure, nearby initial
    conditions $x(0)=(1.01,1,1,1,1)$ shows convergence to a limit cycle.}
    \label{fig_two1}
    \end{figure}

 \begin{figure}
  \centering
\includegraphics[width=\textwidth]{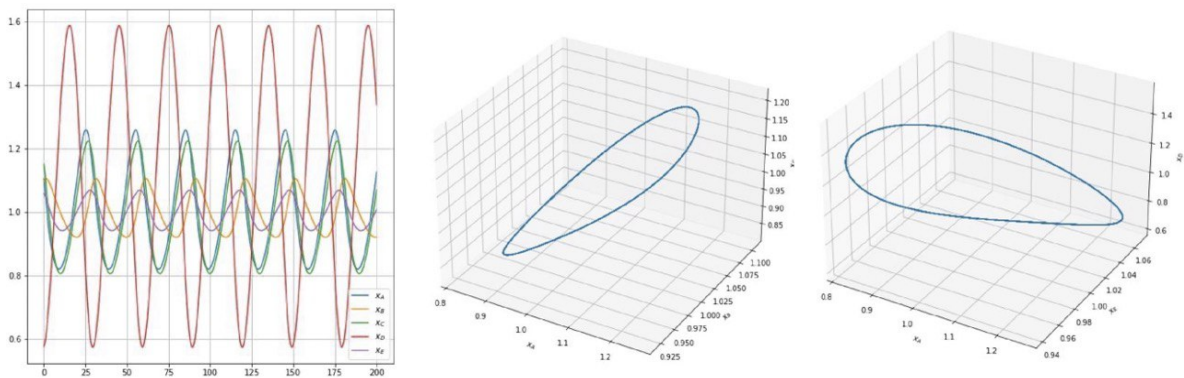}
  \caption{The picture shows numerical simulations for the system in
    \eqref{ex2mm}. The initial condition $x(0)=[1.10433115, 1.0969183,
      1.15016837, 0.57719267, 1.05835769]$ is chosen in proximity of the
    periodic orbit. The plot on the left shows the time-evolution of the
    concentrations $x(t)$, while the plots in the center and on the right
    depict in 3d the periodic orbit for $(x_A,x_B,x_C)$ and
    $(x_A,x_D,x_E)$, respectively.}
  \label{fig_two2}
\end{figure}

The designing idea behind the example is that any Child-Selection matrix
that is both Hurwitz-stable and $D$-unstable allows the detection of purely
imaginary eigenvalues of the Jacobian matrix of networks
with parameter-rich kinetic models, see \cite{VasHunt}. Specifically, this example enlarges the
matrix in \eqref{pfnaSM} to a $4\times4$ Hurwitz-stable stoichiometric
matrix:
\begin{equation}
B=
\begin{pmatrix}
-1 & 0 & 1 & 0\\
-1 & -1 &0 & -2\\
1 & -1 & -1 & 0\\
0 & 0 & -1 & -1
\end{pmatrix},
\end{equation}
which indeed possesses four negative-real-part eigenvalues. However, it is clear that the matrix
\begin{equation}\label{unstableps}
B(\varepsilon)=
\begin{pmatrix}
-1 & 0 & 1 & 0\\
-1 & -1 &0 & -2\varepsilon\\
1 & -1 & -1 & 0\\
0 & 0 & -1 & -\varepsilon
\end{pmatrix}
\end{equation}
loses stability with $\epsilon \rightarrow 0$, in a similar fashion as the
arguments in Lemma~\ref{lemma:CSinst} of the main text (Lemma~\ref{lemma:CSinstSI} here in the SM). Moreover, the matrix
$B(\varepsilon)$ is invertible for any $\varepsilon$ (multiplying a column
just multiplies the determinant), and thus the only possibility for the
loss of stability is a crossing of a pair of purely imaginary
eigenvalues. The full example inherits this core feature via a proper
choice of Michaelis-Menten parameters. Likely this results in a Hopf
bifurcation \cite{Hsubook} and luckily it results in a supercritical Hopf
bifurcation, generating a stable periodic orbit.

An oscillatory example based on the autocatalytic twin in
\eqref{autocattwin2} has been presented in \cite{VasHunt}, and we omit here
the full details. We just report the example for completeness, made of 5
species $A,B,C,D,E$, and 8 reactions.\\
\begin{minipage}{.6\textwidth}
\begin{align*}       
        A &\underset{1}{\rightarrow} B+C\\
        \quad\quad\;\;B &\underset{2}{\rightarrow} C\\
\quad\quad\;\;C+D&\underset{3}{\rightarrow} A\\
C &\underset{4}{\rightarrow} E     
\end{align*}
\end{minipage}%
\begin{minipage}{.1\textwidth}
\begin{align*}
  \quad\quad\;\;D
  &\underset{5}{\rightarrow} 2B\\
\quad\quad\;\;\quad D+E&\underset{6}{\rightarrow}2E\\
E&\underset{7}{\rightarrow}\\
&\underset{F_D}{\rightarrow} D
\end{align*}
\end{minipage}\\
The following Michaelis--Menten system has an unstable
equilibrium at $x=(1,1,1,1,1)$, with nearby initial
conditions converging to a stable periodic orbit.
\begin{equation}
    \begin{cases}
\dot{x}_A=-2x_A+8\frac{x_Cx_D}{1+3x_D},\\
    \dot{x}_B=2x_A-8\frac{x_B}{1+x_B}+4 \frac{x_D}{1+x_D},\\
    \dot{x}_C=2x_A+8\frac{x_B}{1+x_B}-8\frac{x_Cx_D}{1+3x_D}-64 \frac{x_C}{1+15x_C},\\
    \dot{x}_D=-8\frac{x_Cx_D}{1+3x_D}-2 \frac{x_D}{1+x_D}-512 \frac{x_D}{1+63x_D} \frac{x_E}{1+3x_E}+5,\\
    \dot{x}_E=64 \frac{x_C}{1+15x_C}+512 \frac{x_D}{1+63x_D} \frac{x_E}{1+3x_E}-72 \frac{x_E}{1+11x_E}.\\
    \end{cases}
\end{equation}

\subsection*{EXAMPLE C: Oscillations with unstable-negative feedback and
  mass-action variation}\label{subsec:C}
Consider six species $A,B,C,D,E,F$ and the following eleven reactions.\\
\begin{minipage}{0.5\textwidth}
\begin{align*}       
    A+B &\underset{1}{\rightarrow} A+F\\
    B+C &\underset{2}{\rightarrow} B+F\\
    C+D &\underset{3}{\rightarrow} C+F\\
    D+E &\underset{4}{\rightarrow} D+F\\
    E+A &\underset{5}{\rightarrow} E+F\\
    F&\underset{6}{\rightarrow}
\end{align*}
\end{minipage}%
\begin{minipage}{.2\textwidth}
\begin{align*}
   &\underset{F_A}{\rightarrow} A\\
&\underset{F_B}{\rightarrow} B\\
  &\underset{F_C}{\rightarrow} C\\
  &\underset{F_D}{\rightarrow} D\\
 &\underset{F_E}{\rightarrow} E\\
\end{align*}
\end{minipage}\\
The stoichiometric matrix $S$ is
\begin{equation}\label{ex3S}
    S=\begin{pmatrix}
   0 & 0 & 0 & 0 & -1& 0 & 1 & 0 & 0 & 0 & 0\\
  -1 & 0 & 0 & 0 & 0 & 0 & 0 & 1 & 0 & 0 & 0\\
    0 & -1 & 0 & 0 & 0 & 0 & 0 & 0 & 1 & 0 & 0\\
    0 & 0 & -1 & 0 & 0& 0 & 0 & 0 & 0 & 1 & 0\\
    0 & 0 & 0 & -1 & 0 & 0 & 0 & 0 & 0 & 0 & 1\\
    1 & 1 & 1 & 1 & 1 & -1 & 0 & 0 & 0 & 0 & 0\\
\end{pmatrix}.
\end{equation}
Reactions 1, 2, 3, 4, and 5 have been chosen explicitly catalytic to reduce
as much as possible the number of reactions and species. Clearly, the
network contains only the unstable-core
\begin{equation}\label{negfeedex3}
    S_{u}=
    \begin{pmatrix}
    0 & 0 & 0 & 0 & -1\\
    -1 & 0 & 0 & 0 & 0\\
    0 & -1 & 0 & 0 & 0\\
    0 & 0 & -1 & 0 & 0\\
    0 & 0 & 0 & -1 & 0
    \end{pmatrix},
\end{equation}
which constitutes an unstable-negative feedback and thus
not-autocatalytic. In fact, $S_{u}$ has a pair of complex-conjugated
unstable eigenvalues, and all its strict principal submatrices have only
zero eigenvalues; thus they are not Hurwitz-unstable. The dynamical systems
of the concentrations is
\begin{equation}\label{ds}
    \begin{cases}
        \dot{x}_A=F_A-r_5(x_A,x_E),\\
        \dot{x}_B=F_B-r_1(x_A,x_B),\\
        \dot{x}_C=F_C-r_2(x_B,x_C),\\
        \dot{x}_D=F_D-r_3(x_C,x_D),\\
        \dot{x}_E=F_E-r_4(x_D,x_E),\\ 
        \dot{x}_F=r_1(x_A,x_B)+r_2(x_B,x_C)+r_3(x_C,x_D)+r_4(x_D,x_E)+r_5(x_A,x_E)-r_6(x_F).
    \end{cases}
\end{equation}
The system in \eqref{ds} admits oscillations when endowed with
Michaelis--Menten kinetics. For example, for the choice of parameters
\begin{equation}\label{ex3mm}
    \begin{cases}
        \dot{x}_A=1-2\frac{x_Ex_A}{1+x_A},\\
        \dot{x}_B=1-2\frac{x_Ax_B}{1+x_B},\\
        \dot{x}_C=1-2\frac{x_Bx_C}{1+x_C},\\
        \dot{x}_D=1-2\frac{x_Cx_D}{1+x_D},\\
        \dot{x}_E=1-2\frac{x_Dx_E}{1+x_E},\\ 
        \dot{x}_F=2\frac{x_Ex_A}{1+x_A}+2\frac{x_Ax_B}{1+x_B}+2\frac{x_Bx_C}{1+x_C}+2\frac{x_Cx_D}{1+x_D}+2\frac{x_Dx_E}{1+x_E}-5x_F,
    \end{cases}
\end{equation}
the point $x=(1,1,1,1,1,1)$ is an unstable
equilibrium. Nearby initial conditions $x(0)=(1,1,1,1,1.01,1)$ converge to
a stable periodic orbit. See Figure \ref{fig_three1} and Figure \ref{fig_three2}.

\begin{figure}
    \centering \includegraphics[width=0.9\textwidth]{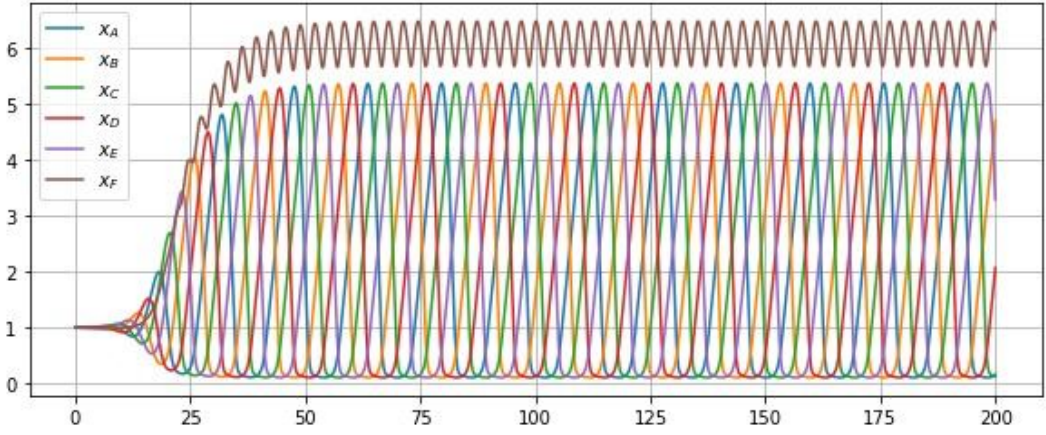}
    \caption{The system in \eqref{ex3mm} possesses an unstable equilibrium
      at $x=(1,1,1,1,1,1)$. In the figure, nearby
      initial conditions $x(0)=(1,1,1,1,1.01,1)$ shows convergence to a
      limit cycle.}
      \label{fig_three1}
      \end{figure}

\begin{figure}
    \centering
\includegraphics[width=0.9\textwidth]{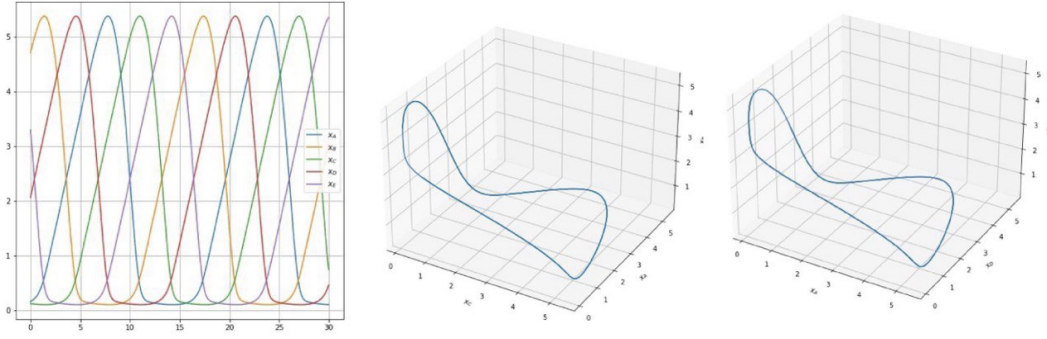}
    \caption{The picture shows numerical simulations for the system in
      \eqref{ex3mm}. The initial condition $x(0)=[0.15944159, 4.70616422,
        0.12197977, 2.0658755, 3.29519854, 6.33865963]$ is chosen in
      proximity of the periodic orbit. The plot on the left shows the
      time-evolution of the concentrations of the species $A,B,C,D,E$ in
      the negative feedback, while the plots in the center and on the right
      depict in 3d the periodic orbit for $(x_A,x_B,x_C)$ and
      $(x_A,x_D,x_E)$, respectively.}
    \label{fig_three2}
\end{figure}

An identical construction as above can be also used in a mass-action
context. We just substitute the catalysts $(A,B,C,D,E)$ with
$(2A,2B,2C,2D,2E)$. This does not change the stoichiometric matrix of the
network, but it changes the Jacobian to our advantage:\\
\begin{minipage}{.6\textwidth}
\begin{align*}       
  2A+B &\underset{1}{\rightarrow} 2A+F\\
  2B+C &\underset{2}{\rightarrow} 2B+F\\
  2C+D &\underset{3}{\rightarrow} 2C+F\\
  2D+E &\underset{4}{\rightarrow} 2D+F\\
  2E+A &\underset{5}{\rightarrow} 2E+F\\
  F&\underset{6}{\rightarrow}
\end{align*}
\end{minipage}%
\begin{minipage}{.1\textwidth}
\begin{align*}
   &\underset{F_A}{\rightarrow} A\\
&\underset{F_B}{\rightarrow} B\\
  &\underset{F_C}{\rightarrow} C\\
  &\underset{F_D}{\rightarrow} D\\
 &\underset{F_E}{\rightarrow} E\\
\end{align*}
\end{minipage}\\
It is easy to check that we get indeed the same stoichiometric matrix as
\eqref{ex3S}. Again, there is an unstable-negative feedback of the form in
\eqref{negfeedex3}. We underline that in a mass-action setting the presence
of an unstable core does not automatically guarantee that there exists an
unstable equilibrium, even if it is the case in this example. The
mass-action system reads
\begin{equation}\label{ex4ma}
    \begin{cases}
        \dot{x}_A=F_A-k_5x_Ax_E^2,\\
        \dot{x}_B=F_B-k_1x_A^2x_B,\\
        \dot{x}_C=F_C-k_2x_B^2x_C,\\
        \dot{x}_D=F_D-k_3x_C^2x_D,\\
        \dot{x}_E=F_E-k_4x_D^2x_E,\\ 
        \dot{x}_F=k_5x_Ax_E^2+k_1x_A^2x_B+k_2x_B^2x_C+k_3x_C^2x_D+k_4x_D^2x_E-k_6x_F.
    \end{cases}
\end{equation}
For the choice of mass-action constants 
\begin{equation}\label{mareaction}
   (k_1,k_2,k_3,k_4,k_5,k_6,F_A,F_B,F_C,F_D,F_E)=
(1,1,1,1,1,5,1,1,1,1,1),
\end{equation} 
the equilibrium $(x_A,x_B,x_C,x_D,x_E,x_F)=(1,1,1,1,1,1)$ is unstable and
nearby trajectory converges to a stable periodic orbit. For the initial
conditions $(x_A,x_B,x_C,x_D,x_E,x_F)=(1,1,1,1,1.01,1)$ the simulation
shows convergence to a limit cycle. See Figure \ref{figfour1} and \ref{fig_four2}.

\begin{figure}
  \centering
\includegraphics[width=\textwidth]{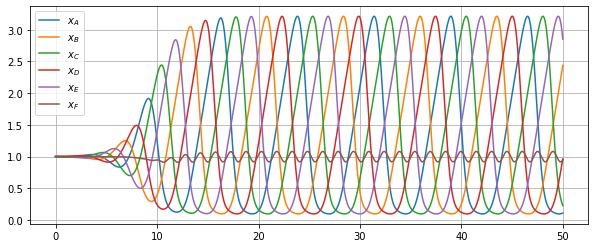}
  \caption{The system in \eqref{ex4ma} with reaction rates
    \eqref{mareaction} possesses an unstable equilibrium at
    $x=(1,1,1,1,1,1)$. In the figure, nearby initial conditions
    $x(0)=(1,1,1,1,1.01,1)$ convergence to a limit cycle.}
    \label{figfour1}
    \end{figure}
    
 \begin{figure}{
    \centering
\includegraphics[width=\textwidth]{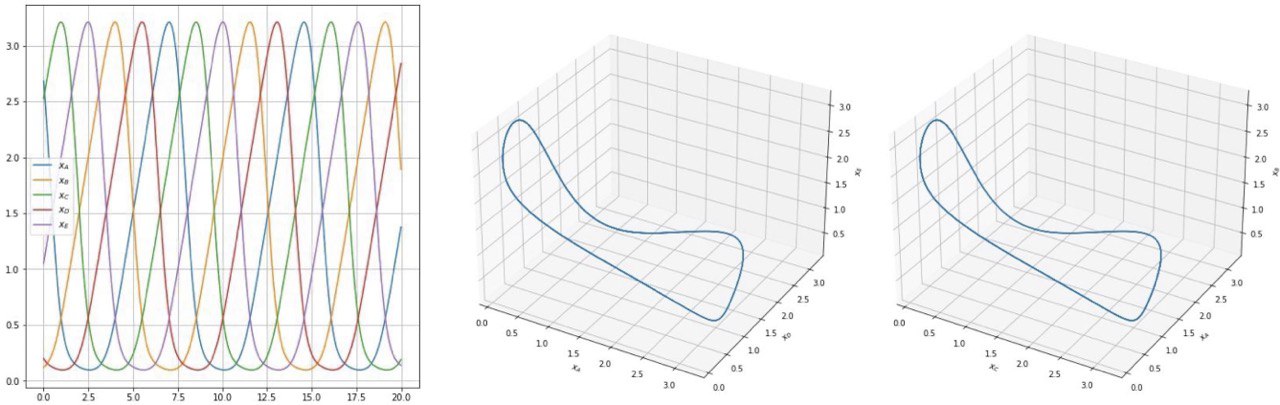}
  \caption{The picture shows numerical simulations for the system in
    \eqref{ex4ma} with reaction rates \eqref{mareaction}. The initial condition $x(0)=[2.68177111, 0.12190425,
      2.52781617, 0.20023526, 1.05389654, 0.97774366]$ is chosen in
    proximity of the periodic orbit. The plot on the left shows the
    time-evolution of the concentrations of the species $A,B,C,D,E$ in the
    negative feedback, while the plots in the center and on the right
    depict in 3d the periodic orbit for $(x_A,x_B,x_C)$ and
    $(x_A,x_D,x_E)$, respectively.}
    \label{fig_four2}}
\end{figure}

\subsection*{EXAMPLE D: Non-autocatalytic unstable-positive
  feedbacks in the dual futile cycle}\label{subsec:D}
The dual futile cycle
\begin{align} \label{SI_doublephosphorilation}
\begin{cases}
A+E_1 \overset{1}{\underset{2}{\rightleftharpoons}} I_1 \underset{3}{\longrightarrow} B+E_1 \overset{7}{\underset{8}{\rightleftharpoons}} I_3 \underset{9}{\longrightarrow} C+E_1;\\
A+E_2 \underset{6}{\longleftarrow} I_2 \overset{5}{\underset{4}{\rightleftharpoons}} B+E_2 \underset{12}{\longleftarrow} I_4 \overset{11}{\underset{10}{\rightleftharpoons}} C+E_2 .
\end{cases}
\end{align}
is a relevant example of a phosphorylation system; it is constituted by two
blocks of distributive and sequential phosphorylation steps, where $A$,
$B$, $C$ are the core molecules, $E_1$ and $E_2$ are the enzymes, and
$I_1$, $I_2$, $I_3$, $I_4$ are intermediates. Phosphorylation is widespread
in biochemical processes \cite{Cohen:1989}, and it is ubiquitous in
cell-signaling in general and specifically in the MAPK cascade
\cite{HuangFerrell:1996}. Such networks can also be seen as a special case
of post-translational modification systems
\cite{ThomsonGunawardena:JTB:2009}.  Hence it is no surprise that they have
attracted great interest from the mathematical community, and we do not aim
here at an exhaustive literature review. A general framework of analysis
for such type of networks has been addressed in
\cite{ThomsonGunawardena:09}, and more recently lead to the definition of
MESSI networks \cite{MESSI:18}.

Specifically the dual futile cycle, moreover, has posed quite challenging
problems as a mathematical toy model.  When endowed with Mass Action
kinetics, the dual futile cycle has been analyzed via a singular
perturbation to a \emph{monotone system} \cite{WangSontag:2008}. It was
proven that the network admits a \emph{toric equilibrium}
\cite{AliciaetalToric:2012}. It was also shown that the system admits a
\emph{cusp bifurcation} and consequent bistability
\cite{HellRendall:2015}. Next, the region of parameters that admits
multistationarity has been fully described \cite{Feliu:2020}. In contrast,
it is still an open question whether or not the system admits a Hopf
bifurcation \cite{CarstenHopfExclusion19}, with some numerical evidence
that such bifurcation cannot occur, but no analytical or computer-algebra
proof.

The stoichiometric matrix reads
\begin{equation}
S=
\begin{blockarray}{ccccccccccccc}
& 1 & 2 & 3 & 4 & 5 & 6 & 7 & 8 & 9 & 10 & 11 & 12\\
\begin{block}{c(cccccccccccc)}
A & -1 & 1 & 0 & 0 & 0 & 1 & 0 & 0 & 0 & 0 & 0 & 0\\
B & 0 & 0 & 1 & -1 & 1 & 0 & -1 & 1 & 0 & 0 & 0 & 1\\
C & 0 & 0 & 0 & 0 & 0 & 0 & 0 & 0 & 1 & -1 & 1 & 0\\
E_1 & -1 & 1 & 1 & 0 & 0 & 0 & -1 & 1 & 1 & 0 & 0 & 0\\
E_2 & 0 & 0 & 0 & -1 & 1 & 1 & 0 & 0 & 0 & -1 & 1 & 1\\
I_1 & 1 & -1 & -1 & 0 & 0 & 0 & 0 & 0 & 0 & 0 & 0 & 0\\
I_2 & 0 & 0 & 0 & 1 & -1 & -1 & 0 & 0 & 0 & 0 & 0 & 0\\
I_3 & 0 & 0 & 0 & 0 & 0 & 0 & 1 & -1 & -1 & 0 & 0 & 0\\
I_4 & 0 & 0 & 0 & 0 & 0 & 0 & 0 & 0 & 0 & 1 & -1 & -1\\
\end{block}
\end{blockarray}\;.
\end{equation}

We firstly note an obvious $\mathbb{Z}_2$-symmetry $\sigma$ in the
network. For species,
\begin{equation}\sigma(A,B,C,E_1,E_2,I_1,I_2,I_3,I_4)=(C,B,A,E_2,E_1,I_4,I_3,I_2,I_1),
\end{equation} and for reactions 
\begin{equation}
\sigma(1,2,3,4,5,6,7,8,9,10,11,12)=(10,11,12,7,8,9,4,5,6,1,2,3).
\end{equation}

The Jacobian matrix $G$ of the associated dynamical system of the
concentrations is a $9\times9$ matrix. A direct computer-algebra
computation of the characteristic polynomial
\begin{equation}
g(\lambda)=\sum_{k=0}^9(-1)^kc_k\lambda^{9-k}
\end{equation}
shows that $c_7=c_8=c_9\equiv 0$, implying that $G$ is at most of rank
$\rk(G)=6$. In particular, there are three linear conserved quantities:
\begin{equation}
\begin{cases}
w_1=x_{E_1}+x_{I_1}+x_{I_3};\\
w_2=x_{E_2}+x_{I_4}+x_{I_2};\\
w_3=x_{A}+x_{B}+x_{C}+x_{I_1}+x_{I_2}+x_{I_3}+x_{I_4},
\end{cases}
\end{equation}
that correspond to three independent left kernel vectors of the
stoichiometric matrix. Note that $\sigma(w_1)=w_2$ and $\sigma(w_3)=w_3$.
We expand the remaining coefficients in elementary CB-components as in
\eqref{eq:completeCBexpansionSI}. The expansion for the first four
coefficients surprisingly shows that $\det{S[\pmb{\kappa}]}=(-1)^k$ or
$\det{S[\pmb{\kappa}]}=0$ for all $k$-CS $\pmb{\kappa}$, $k=1,2,3,4.$ This
implies that $\sign(c_1)=\sign(c_2)=\sign(c_3)=\sign(c_4)=1$ for all
choices of parameters, and that in particular there are no
generalized-unstable-positive feedbacks (and thus no unstable-positive
feedbacks) with less than 5 species. The same expansion for $c_5$ shows
that $|\det S[\pmb{\kappa}]|=1$ or $\det S[\pmb{\kappa}]=0$ for all
$5$-CS. Moreover, there are only two generalized-positive-feedbacks with
$\det S[\pmb{\kappa}]=(-1)^{5-1}$:
\begin{equation}
S_{1}=
\begin{blockarray}{cccccc}
& 1 & 7 & 9 & 4 & 6\\
\begin{block}{c(ccccc)}
\\
A & -1 & 0 & 0 & 0 & 1\\
E_1 & -1 & -1 & 1 & 0& 0\\
I_3 & 0 & 1 & -1 & 0 & 0\\
B & 0 & -1 & 0 & -1 & 0\\
I_2 & 0 & 0 & 0 & 1 & -1\\
\\
\end{block}
\end{blockarray}
\quad\text{ and }\quad S_2=
\begin{blockarray}{cccccc}
& 10 & 4 & 6 & 7 & 9\\
\begin{block}{c(ccccc)}
\\
C & -1 & 0 & 0 & 0 & 1\\ 
E_2 & -1 & -1 & 1 & 0& 0\\
I_2 & 0 & 1 & -1 & 0 & 0\\
B & 0 & -1 & 0 & -1 & 0\\
E_3 & 0 & 0 & 0 & 1 & -1\\
\\
\end{block}
\end{blockarray}\;,
\end{equation}
with $\det S_1=\det S_2 =1$.  We note again that $\sigma(S_1)=S_2$, which
is reflected in the fact that the matrices are indeed identical. The same
expansion for the coefficient $c_6$ again shows $|\det S[\pmb{\kappa}]|=1$
or $\det S[\pmb{\kappa}]=0$, for all $6$-CS, which concludes that $|\det
S[\pmb{\kappa}]|=1$ for all invertible Child-Selection matrices in the
network. Moreover, there are eleven 6-Child-Selection matrices
$S[\pmb{\kappa}]$ with $\det S[\pmb{\kappa}]=(-1)^{6-1}=-1$. Among those
eleven, five matrices contain as a proper principal submatrix $S_1$, and
five contain $S_2$ (again related by the symmetry $\sigma$) and thus these
are not unstable-cores. The last remaining Child-Selection matrix
identifies the following generalized-unstable-positive feedback:
\begin{equation}
S_3=
\begin{blockarray}{ccccccc}
& 1 & 7 & 9 & 10 & 4 & 6\\
\begin{block}{c(cccccc)}
\\
A & -1 & 0 & 0 & 0 & 0 & 1\\
E_1 & -1 & -1 & 1 & 0& 0 & 0\\
I_3 & 0 & 1 & -1 & 0 & 0 & 0\\
C & 0 & 0 & 1 & -1 & 0 & 0\\
E_2 & 0 & 0 & 0 & -1 & -1 & 1\\
I_2& 0 & 0 & 0 & 0 & 1 & -1\\
\\
\end{block}
\end{blockarray}\;,
\end{equation}
with $\det S_3 = (-1)^{6-1}=-1$, and that satisfies $\sigma(S_3)=S_3$.  We
then note that $S_1, S_2, S_3$ are not Metzler matrices, and thus they are
non-autocatalytic. However, they all possess autocatalytic twins of the
following form:
\begin{equation}
    A_1=A_2= \begin{pmatrix}
      -1 & 0 & 0 & 0 & 1\\ 
      1 & -1 & 1 & 0& 0\\
      0 & 1 & -1 & 0 & 0\\
      0 & 1 & 0 & -1 & 0\\
      0 & 0 & 0 & 1 & -1
    \end{pmatrix}\quad\text{ and }
    A_3=\begin{pmatrix}
    -1 & 0 & 0 & 0 & 0 & 1\\
    1 & -1 & 1 & 0& 0 & 0\\
    0 & 1 & -1 & 0 & 0 & 0\\
    0 & 0 & 1 & -1 & 0 & 0\\
    0 & 0 & 0 & 1 & -1 & 1\\
    0 & 0 & 0 & 0 & 1 & -1
    \end{pmatrix},
\end{equation}
which, via Cor.~\ref{cor:twingenfeed}, implies that $S_1$, $S_2$, and
$S_3$ are unstable-positive feedbacks. Finally, we conjecture that the
above features hold for any $n$-futile cycles, built iteratively by adding
further distributive and sequential phosphorylation steps. That is, we
conjecture that only unstable-positive feedbacks with matrices as $S_1=S_2$
or $S_3$ appear and that $|\det S[\pmb{\kappa}]|=1$ for all invertible
Child-Selection matrices. We hope that such a striking structure may help a
further understanding of this remarkable family of networks.

\section*{Unstable-positive feedback with unstable dimension greater than one}\label{sec:SIupfex}
Consider the following
unstable-positive feedback:
\begin{equation}\label{examplecore3}
S^+_1:=\begin{pmatrix}
    -1 & 0 & 0 & 0 & 4\\
    4 & -1 & 0 & 0 & 0\\
    0 & 4 & -1 & 0 & 0\\
    0 & 0 & 4 & -1 & 0 \\
    0 & 0 & 0 & 4 & -1
\end{pmatrix}.
\end{equation}
It is straightforward to check that $S^+_1$ is indeed an unstable-positive
feedback using the same arguments as for $S^+$ in the main text, Section 6.  Moreover, $S^+_1$
has two eigenvalues with a negative-real part and three eigenvalues with a
positive-real part:
\begin{equation}
    (\lambda_1, \lambda_2, \lambda_3, \lambda_4, \lambda_5)\approx(-4.23607 \pm 2.35114 i, 0.236068 \pm 3.80423 i, 3).
\end{equation}

\bibliographystyle{vancouver}
\bibliography{references}

\begin{thebibliography}{10}

\bibitem{Aris:64}
Aris R.
\newblock Prolegomena to the rational analysis of systems of chemical
  reactions.
\newblock Arch Rational Mech An. 1965;19:81-99.

\bibitem{Zhabotinsky:64}
Zhabotinsky AM.
\newblock Periodical oxidation of malonic acid in solution (a study of the
  {Belousov} reaction kinetics).
\newblock Biofizika. 1964;9:306-11.
\newblock (russian).

\bibitem{Cassani:21}
Cassani A, Monteverde A, Piumetti M.
\newblock {Belousov--Zhabotinsky} type reactions: the non-linear behavior of
  chemical systems.
\newblock J Math Chem. 2021;59:792-826.

\bibitem{Ostwald:1890}
Ostwald W.
\newblock {\"U}ber {A}utokatalyse.
\newblock Ber Verh Kgl S{\"a}chs Ges Wiss Leipzig, Math Phys Class.
  1890;42:189-91.

\bibitem{Bissette:13}
Bissette AJ, Fletcher SP.
\newblock Mechanisms of Autocatalysis.
\newblock Angew Chem Int Ed. 2013;52:12800-26.

\bibitem{Schuster:19}
Schuster P.
\newblock What is special about autocatalysis?
\newblock Monatsh Chem. 2019;150:763-75.

\bibitem{Kun:08}
Kun {\'A}, Papp B, Szathm{\'a}ry E.
\newblock Computational identification of obligatorily autocatalytic
  replicators embedded in metabolic networks.
\newblock Genome Biology. 2008;9:R51.

\bibitem{Siami:20}
Siami M, Motee N, Buzi G, Bamieh B, Khammash M, Doyle JC.
\newblock Fundamental Limits and Tradeoffs in Autocatalytic Pathways.
\newblock IEEE Trans Automat Contr. 2020;65:733-40.

\bibitem{Kauffman:86}
Kauffman SA.
\newblock Autocatalytic Sets of Proteins.
\newblock J Theor Biol. 1986;119:1-24.

\bibitem{Smith:04}
Smith E, Morowitz HJ.
\newblock Universality in intermediary metabolism.
\newblock Proc Natl Acad Sci USA. 2004;101:13168-73.

\bibitem{Smith:13}
Braakman R, Smith E.
\newblock The compositional and evolutionary logic of metabolism.
\newblock Physical Biology. 2013;10:011001.

\bibitem{Kaleta:06}
Kaleta C, Centler F, Dittrich P.
\newblock Analyzing molecular reaction networks: from pathways to chemical
  organizations.
\newblock Mol Biotechnol. 2006;34:117-23.

\bibitem{Benkoe:09a}
Benk{\"o} G, Centler F, Dittrich P, Flamm C, Stadler BMR, Stadler PF.
\newblock A Topological Approach to Chemical Organizations.
\newblock Alife. 2009;15:71-88.

\bibitem{Andersen:12a}
Andersen JL, Flamm C, Merkle D, Stadler PF.
\newblock Maximizing output and recognizing autocatalysis in chemical reaction
  networks is {NP}-complete.
\newblock J Syst Chem. 2012;3:1.

\bibitem{Deshpande:14}
Deshpande A, Gopalkrishnan M.
\newblock Autocatalysis in reaction networks.
\newblock Bull Math Biol. 2014;76:2570-95.

\bibitem{Barenholz:17}
Barenholz U, Davidi D, Reznik E, Bar-On Y, Antonovsky N, Noor E, et~al.
\newblock Design principles of autocatalytic cycles constrain enzyme kinetics
  and force low substrate saturation at flux branch points.
\newblock eLife. 2017;6:e20667.

\bibitem{blokhuis20}
Blokhuis A, Lacoste D, Nghe P.
\newblock Universal motifs and the diversity of autocatalytic systems.
\newblock Proc Natl Acad Sci. 2020;117(41):25230-6.

\bibitem{unterberger22}
Unterberger J, Nghe P.
\newblock Stoechiometric and dynamical autocatalysis for diluted chemical
  reaction networks.
\newblock Journal of Mathematical Biology. 2022;85(3):26.

\bibitem{DefAut20}
Andersen JL, Flamm C, Merkle D, Stadler PF.
\newblock Defining Autocatalysis in Chemical Reaction Networks.
\newblock J Systems Chem. 2020;8:121-33.

\bibitem{Gyorgyi:91}
Gyorgyi L, Field RJ.
\newblock Simple models of deterministic chaos in the {Belousov-Zhabotinskii}
  reaction.
\newblock J Phys Chem. 1991;95(17):6594-602.

\bibitem{Hsubook}
Hsu SB, Chen KC.
\newblock Ordinary differential equations with applications. vol.~23.
\newblock World scientific; 2022.

\bibitem{GuHo84}
Guckenheimer J, Holmes P.
\newblock Nonlinear oscillations, dynamical systems and bifurcations of vector
  fields.
\newblock Springer; 1984.

\bibitem{MM13}
Michaelis L, Menten ML.
\newblock Die Kinetik der Invertinwirkung.
\newblock Biochem Z. 1913;49:333-69.

\bibitem{Hill10}
Hill AV.
\newblock The possible effects of the aggregation of the molecules of
  haemoglobin on its dissociation curves.
\newblock J Physiol. 1910;40:4-7.

\bibitem{Muller:12}
M{\"u}ller S, Regensburger G.
\newblock Generalized mass action systems: Complex balancing equilibria and
  sign vectors of the stoichiometric and kinetic-order subspaces.
\newblock SIAM J Appl Math. 2012;72(6):1926-47.

\bibitem{Soule2003}
Soul{\'e} C.
\newblock Graphic requirements for multistationarity.
\newblock ComPlexUs. 2003;1(3):123-33.

\bibitem{Ba-07}
Banaji M, Donnell P, Baigent S.
\newblock P matrix properties, injectivity, and stability in chemical reaction
  systems.
\newblock SIAM Journal on Applied Mathematics. 2007;67(6):1523-47.

\bibitem{Ang07}
Angeli D, De~Leenheer P, Sontag ED.
\newblock A {P}etri net approach to the study of persistence in chemical
  reaction networks.
\newblock Mathematical biosciences. 2007;210(2):598-618.

\bibitem{MA67}
Guldberg CM, Waage P.
\newblock Etudes sur les affinit{\'e}s chimiques.
\newblock Imprimerie de Br{\o}gger \& Christie; 1867.

\bibitem{VasHunt}
Vassena N.
\newblock Symbolic hunt of instabilities and bifurcations in reaction networks.
\newblock Discrete and Continuous Dynamical Systems B. 2023.

\bibitem{Arrow:58}
Arrow KJ, McManus M.
\newblock A note on dynamical stability.
\newblock Econometrica. 1958;26:448-54.

\bibitem{Ku19}
Kushel OY.
\newblock Unifying matrix stability concepts with a view to applications.
\newblock SIAM Review. 2019;61(4):643-729.

\bibitem{Wang:04}
Wang X.
\newblock A Simple Proof of {Descartes}'s Rule of Signs.
\newblock American Mathematical Monthly. 2004;111(6):525-6.

\bibitem{Smith:09}
Smith E, Morowitz HJ, Copley SD.
\newblock Core Metabolism as a Self-Organized System.
\newblock In: Rasmussen S, Bedau MA, Chen L, Deamer D, Krakauer DC, Packard NH,
  et~al., editors. Protocells: Bridging Nonliving and Living Matter. MIT Press;
  2008. p. 432-60.

\bibitem{Duan:21}
Duan X, Jafarpour S, Bullo F.
\newblock Graph-Theoretic Stability Conditions for {Metzler} Matrices and
  Monotone Systems.
\newblock SIAM J Control Opt. 2021;59(5):3447-71.

\bibitem{Gers:31}
Gershgorin SA.
\newblock {\"U}ber die {A}bgrenzung der {E}igenwerte einer {M}atrix.
\newblock Bulletin de l'Acad\'emie des Sciences de l'URSS Classe des sciences
  math\'ematiques et naturelles. 1931;6:749-54.

\bibitem{HellRendall:2015}
Hell J, Rendall AD.
\newblock A proof of bistability for the dual futile cycle.
\newblock Nonlinear Analysis: Real World Applications. 2015;24:175-89.

\bibitem{Fei87}
Feinberg M.
\newblock Chemical reaction network structure and the stability of complex
  isothermal reactors—{I}. {T}he deficiency zero and deficiency one theorems.
\newblock Chemical Engineering Science. 1987;42(10):2229-68.

\bibitem{BaCra10}
Banaji M, Craciun G.
\newblock Graph-theoretic criteria for injectivity and unique equilibria in
  general chemical reaction systems.
\newblock Advances in Applied Mathematics. 2010;44(2):168-84.

\bibitem{ShiFei12}
Shinar G, Feinberg M.
\newblock Concordant chemical reaction networks.
\newblock Mathematical Biosciences. 2012;240(2):92-113.

\bibitem{Fei19}
Feinberg M.
\newblock Foundations of Chemical Reaction Network Theory.
\newblock Springer; 2019.

\bibitem{BaPa16}
Banaji M, Pantea C.
\newblock Some results on injectivity and multistationarity in chemical
  reaction networks.
\newblock SIAM Journal on Applied Dynamical Systems. 2016;15(2):807-69.

\bibitem{Despons:23}
Despons A, De~Decker Y, Lacoste D.
\newblock Structural constraints limit the regime of optimal flux in
  autocatalytic reaction networks.
\newblock arXiv; 2023.

\bibitem{Gagrani:23}
Gagrani P, Blanco V, Smith E, Baum D.
\newblock The geometry and combinatorics of an autocatalytic ecology in
  chemical and cluster chemical reaction networks.
\newblock arXiv; 2023.

\bibitem{Narendra:10}
Narendra KS, Shorten R.
\newblock {Hurwitz} Stability of {Metzler} Matrices.
\newblock IEEE Trans Automatic Control. 2010;55(6):1484-7.

\bibitem{Bullo20}
Bullo F.
\newblock Lectures on network systems. vol.~1.
\newblock Kindle Direct Publishing; 2020.

\bibitem{oregonator}
Field RJ, Noyes RM.
\newblock Oscillations in chemical systems. {IV}. {L}imit cycle behavior in a
  model of a real chemical reaction.
\newblock J Chem Phys. 1974;60(5):1877-84.

\bibitem{V23}
Vassena N.
\newblock Structural Conditions for Saddle-Node Bifurcations in Chemical
  Reaction Networks.
\newblock SIAM Journal on Applied Dynamical Systems. 2023;22(3):1639-72.

\bibitem{Cohen:1989}
Cohen P.
\newblock The structure and regulation of protein phosphatases.
\newblock Annual Review Biochemistry. 1989;58(1):453-508.

\bibitem{HuangFerrell:1996}
Huang CY, Ferrell~Jr JE.
\newblock Ultrasensitivity in the mitogen-activated protein kinase cascade.
\newblock Proc Natl Acad Sci USA. 1996;93(19):10078-83.

\bibitem{ThomsonGunawardena:JTB:2009}
Thomson M, Gunawardena J.
\newblock The rational parameterisation theorem for multisite
  post-translational modification systems.
\newblock J Theor Biology. 2009;261(4):626-36.

\bibitem{ThomsonGunawardena:09}
Thomson M, Gunawardena J.
\newblock Unlimited multistability in multisite phosphorylation systems.
\newblock Nature. 2009;460(7252):274-7.

\bibitem{MESSI:18}
P{\'e}rez~Mill{\'a}n M, Dickenstein A.
\newblock The Structure of {MESSI} Biological Systems.
\newblock SIAM J Appl Dyn Syst. 2018;17(2):1650-82.

\bibitem{WangSontag:2008}
Wang L, Sontag ED.
\newblock Singularly Perturbed Monotone Systems and an Application to Double
  Phosphorylation Cycles.
\newblock J Nonlinear Science. 2008;18(5):527-50.

\bibitem{AliciaetalToric:2012}
P{\'e}rez~Mill{\'a}n M, Dickenstein A, Shiu A, Conradi C.
\newblock Chemical reaction systems with toric steady states.
\newblock Bull Math Biol. 2012;74:1027-65.

\bibitem{Feliu:2020}
Feliu E, Kaihnsa N, de~Wolff T, Y{\"u}r{\"u}k O.
\newblock The kinetic space of multistationarity in dual phosphorylation.
\newblock J Dynamics Diff Eqns. 2020;34:1-28.

\bibitem{CarstenHopfExclusion19}
Conradi C, Feliu E, Mincheva M.
\newblock On the existence of {H}opf bifurcations in the sequential and
  distributive double phosphorylation cycle.
\newblock Mathematical biosciences and engineering: MBE. 2019;17(1):494-513.

\end{thebibliography}

\end{document}